\newtheorem{thm}{Theorem}
\newtheorem{theorem}{Theorem}
\newtheorem{lem}{Lemma}
\newtheorem{prop}{Proposition}
\providecommand{\ketbra}[1]{\ket{#1}\bra{#1}}
\providecommand{\trace}{\textnormal{tr}}
\newcommand{\ui}{\mathrm{i}}
\newcommand{\ue}{\mathrm{e}}
\def\conv{\textrm{conv}}
\begin{document}

\title{A resource theory of entanglement with a unique multipartite \\
maximally entangled state}

\author{Patricia Contreras-Tejada}

\affiliation{Instituto de Ciencias Matem\'aticas, E-28049 Madrid, Spain}

\author{Carlos Palazuelos}

\affiliation{Departamento de Análisis Matem\'atico y Matem\'atica Aplicada, Universidad
Complutense de Madrid, E-28040 Madrid, Spain}

\affiliation{Instituto de Ciencias Matem\'aticas, E-28049 Madrid, Spain}

\author{Julio I. de Vicente}

\affiliation{Departamento de Matem\'aticas, Universidad Carlos III de Madrid, E-28911,
Legan\'es (Madrid), Spain}
\begin{abstract}
Entanglement theory is formulated as a quantum resource theory in
which the free operations are local operations and classical communication
(LOCC). This defines a partial order among bipartite pure states that
makes it possible to identify a maximally entangled state, which turns
out to be the most relevant state in applications. However, the situation
changes drastically in the multipartite regime. Not only do there
exist inequivalent forms of entanglement forbidding the existence
of a unique maximally entangled state, but recent results have shown
that LOCC induces a trivial ordering: almost all pure entangled multipartite
states are incomparable (i.e.\ LOCC transformations among them are
almost never possible). In order to cope with this problem we consider
alternative resource theories in which we relax the class of LOCC
to operations that do not create entanglement. We consider two possible
theories depending on whether resources correspond to multipartite
entangled or genuinely multipartite entangled (GME) states and we
show that they are both non-trivial: no inequivalent forms of entanglement
exist in them and they induce a meaningful partial order (i.e.\ every
pure state is transformable to more weakly entangled pure states).
Moreover, we prove that the resource theory of GME that we formulate
here has a unique maximally entangled state, the generalized GHZ state,
which can be transformed to any other state by the allowed free operations.
\end{abstract}
\maketitle

\paragraph{Introduction.}

Entanglement is a striking feature of quantum theory with no classical
analogue. Although initially studied to address foundational issues
\cite{entanglement}, the development of quantum information theory
\cite{nielsenchuang} in the last few decades has elevated it to a
resource that allows tasks to be implemented which are impossible
in classical systems. The resource
theory of entanglement \cite{reviews} aims at providing a rigorous
framework to qualify and quantify entanglement and, ultimately,
to understand fully its capabilities and limitations within the realm
of quantum technologies. However, this theory is much more firmly
developed for bipartite than multipartite systems. In fact,
although a few applications have been proposed within the latter setting
such as secret sharing \cite{secretsharing}, the one-way quantum
computer \cite{1wayqc} and metrology \cite{metrology}, a deeper
understanding of the complex structure of multipartite entangled states
might inspire further protocols in quantum information
science and better tools for the study of condensed-matter systems.

The wide applicability of the formulation of entanglement theory as
a resource theory has motivated an active line of work \cite{reviewresource}
that studies different quantum effects from this point of view such
as coherence \cite{coherence}, reference frame alignment \cite{frameness},
thermodynamics \cite{thermodynamics}, non-locality \cite{nonlocality}
or steering \cite{steering}. The main question a resource theory
addresses is to order the set of states and provide means to quantify
their nature as a resource. The so-called free operations are crucial
to this task. This is a subset of transformations, which the
given scenario dictates can be implemented at no cost. Thus, all
states that can be prepared with these operations are free states.
Conversely, non-free states acquire the status of a resource:
granted such states, the limitations of the corresponding scenario
might be overcome. Moreover, the concept of free operations allows
an order relation to be defined. If a state $\rho$ can be transformed
into $\sigma$ by some free operation, then $\rho$ cannot be less
resourceful than $\sigma$ since any task achievable by $\sigma$
is also achievable by $\rho$ as the corresponding transformation
can be freely implemented. However, the converse is not necessarily true. Furthermore, one can introduce resource quantifiers
as functionals that preserve this order.

Since entanglement is a property of systems with many constituents
which may be far away, the natural choice for free operations in this
resource theory is local operations and classical communication (LOCC).
Indeed, parties bound to LOCC can only prepare separable states, and
entangled states become a resource to overcome the constraints imposed
by LOCC manipulation. Nielsen characterized in \cite{nielsen} the
possible LOCC conversions among pure bipartite states, which revealed
that the LOCC ordering reduces to majorization \cite{majorization}
and, remarkably, that there is a unique maximally entangled state
for fixed local dimension. This is because this state can be transformed
by LOCC into any other state of that dimension but no other state
of that dimension can be transformed into it. This state is then regarded
as a gold standard to measure entanglement and, unsurprisingly, it
turns out to be the most useful state for bipartite entanglement applications
such as teleportation. Importantly, the situation changes drastically
in the multipartite case. Here, reference \cite{slocc1} and subsequent
work \cite{slocc2} have shown that there exist inequivalent
forms of entanglement: the state space is divided into classes, the
so-called stochastic LOCC (SLOCC) classes, of states which can be
interconverted 
with non-zero probability by LOCC but
cannot be transformed outside the class by LOCC, even probabilistically.
This in particular shows that no maximally entangled
state can exist for multipartite states. Still, one could in principle study
the ordering induced by LOCC within each SLOCC class. Recent work
\cite{isolation} in this direction has revealed, however, an extreme
feature that culminates with the result of Ref.\ \cite{isolationlast}:
almost all pure states of more than three parties are \textit{isolated},
i.e.\ they cannot be obtained from nor transformed to another inequivalent
pure state of the same local dimensions by LOCC. This means that almost
all pure states are incomparable by LOCC, inducing a trivial
ordering and a meaningless arbitrariness in the construction of entanglement
measures. In this sense, one may say that the resource theory of multipartite
entanglement with LOCC is generically trivial.

We believe this calls for a critical reexamination of the resource
theory of entanglement and, in particular, for the notion of LOCC
as the ordering-defining relation. Indeed, although LOCC transformations
have a clear operational interpretation, this is
not, in fact, the most general class of transformations that maps the set of
separable states into itself. In other words, LOCC is strictly included
in the class of non-entangling operations. Thus, from the abstract point of view of resource theories other consistent theories of entanglement (i.e.\ with separable states being the free states) are possible where the set of free operations is
larger than LOCC. Hence, in principle, these could give a more
meaningful ordering and revealing structure in the set of multipartite
entangled states. To study such possibility is precisely the goal
of this Letter. A similar approach has been taken to address other unsatisfying features of the resource theory of entanglement under LOCC such as irreversibility of state transformations for an arbitrarily large number of copies \cite{irreversibility}. Remarkably, reference \cite{brandaoplenio} has shown that shifting the paradigm from LOCC to asymptotic non-entangling operations provides a reversible theory of asymptotic entanglement interconversion with a unique entanglement measure and this result has been extended in \cite{brandaogour} to arbitrary resource theories under asymptotic resource-non-generating operations \cite{reviewresource}. Also, in the absence of a clear set of physical constraints determining the free operations, certain quantum resource theories have been constructed by first defining the set of free states and then considering classes of operations that preserve this set. This is the case of the resource theory of coherence \cite{coherencePlenio}, which has been found useful in e.g.\ metrology applications \cite{visibility} and quantum channel discrimination \cite{robcoh} and which has subsequently given rise to a fruitful research line considering an operational interpretation for the set of free operations (see \cite{ChiGocoherence,coherence} and references therein).

Since we seek whether a non-trivial theory is at all
possible for single-copy manipulations, we consider here the resource theory of entanglement under the
largest possible class of free operations in this regime: strictly non-entangling operations.
However, multipartite entanglement comes in two different forms. We will call entangled those states that are not fully separable (FS), while we will
call genuinely multipartite entangled (GME) those states which are not biseparable
(BS). Thus, one can formulate two theories: one in which entangled
states are considered a resource and where the free operations are
full separability-preserving (FSP) and the analogous with GME states
and biseparability-preserving (BSP) operations. Interestingly, our
first result is that both formalisms lead to non-trivial theories:
no resource state
is isolated in any of these scenarios. Moreover, we show that there are no inequivalent forms of entanglement. Then, we consider whether there
exists a unique multipartite maximally entangled state in these theories
like in the bipartite case. While we find a negative answer (at least in the simplest non-trivial case of 3-qubit states) for FSP operations, our main result is that the question is
answered affirmatively in the resource theory of
GME under BSP operations. The maximally GME state turns out to be
the generalized Greenberger-Horne-Zeilinger (GHZ) state.

\paragraph{Definitions and preliminaries.}

We will consider $n$-partite systems with local dimension $d$, i.e.\ states
in the Hilbert space $H=H_{1}\otimes\cdots\otimes H_{n}=(\mathbb{C}^{d})^{\otimes n}$.
Given a subset $M$ of $[n]=\{1,\ldots,n\}$ and its complement $\bar{M}$,
we denote by $H_{M}$ the tensor product of the Hilbert spaces corresponding
to the parties in $M$ and analogously with $H_{\bar{M}}$. A pure
state $|\psi\rangle\in H$ is FS (otherwise entangled) if $|\psi\rangle=|\psi_{1}\rangle\otimes|\psi_{2}\rangle\otimes\cdots\otimes|\psi_{n}\rangle$
for some states $|\psi_{i}\rangle\in H_{i}$ $\forall i$, while it
is BS (otherwise GME) if $|\psi\rangle=|\psi_{M}\rangle\otimes|\psi_{\bar{M}}\rangle$
for some states $|\psi_{M}\rangle\in H_{M}$ and $|\psi_{\bar{M}}\rangle\in H_{\bar{M}}$
and $M\subsetneq[n]$. These notions are extended to mixed states
by the convex hull and we define the sets of FS and BS states by
\begin{equation}
\mathcal{FS}=\conv\{\psi:|\psi\rangle\textrm{ is FS}\},\;\mathcal{BS}=\conv\{\psi:|\psi\rangle\textrm{ is BS}\},
\end{equation}
where here and throughout the paper we use the notation $\psi=|\psi\rangle\langle\psi|$
whenever a state is specified as pure. Transformations in quantum
theory are given by completely positive and trace preserving (CPTP)
maps and we say that such a map $\Lambda$ (from and to operators
on $H$) is FSP (BSP) if $\Lambda(\rho)\in\mathcal{FS}$ $\forall\rho\in\mathcal{FS}$
($\Lambda(\rho)\in\mathcal{BS}$ $\forall\rho\in\mathcal{BS}$). We
will say that a functional $E$ taking operators on $H$ to non-negative
real numbers is an FSP-measure (BSP-measure) if $E(\rho)\geq E(\Lambda(\rho))$
for every state $\rho$ and FSP (BSP) map $\Lambda$. This is completely
analogous to entanglement measures, which are required to be non-increasing
under LOCC maps. Although LOCC is a strict subset of the FSP and BSP
maps, some well-known entanglement measures are still FSP- or BSP-measures
and this will play an important role in assessing which transformations
are possible within the two formalisms that we consider here. Indeed,
measures of the form
\begin{equation}
E_{\mathcal{X}}(\rho)=\inf_{\sigma\in\mathcal{X}}E(\rho||\sigma),\label{measures}
\end{equation}
where $\mathcal{X}$ stands for either $\mathcal{FS}$ or $\mathcal{BS}$,
have the corresponding monotonicity property as long as the distinguishability
measure $E(\rho||\sigma)$ is contractive, i.e.\ $E(\Lambda(\rho)||\Lambda(\sigma))\leq E(\rho||\sigma)$
for every CPTP map $\Lambda$. This includes the relative entropy
of entanglement \cite{ree} for $E(\rho||\sigma)=\tr(\rho\log\rho)-\tr(\rho\log\sigma)$
and the robustness ($R_{\mathcal{X}}$) \cite{robustness} for
\begin{equation}
E(\rho||\sigma)=R(\rho||\sigma)=\min\{s:(\rho+s\sigma)/(1+s)\in\mathcal{X}\}.\label{defrobustness}
\end{equation}
If one uses the fidelity $E(\rho||\sigma)=1-F(\rho||\sigma)=1-\tr^{2}\sqrt{\sqrt{\rho}\sigma\sqrt{\rho}}$,
for pure states Eq.\ (\ref{measures}) boils down to the geometric
measure \cite{geometric}, which we will denote by $G_{\mathcal{X}}$
and which is then seen to be a measure under maps that preserve $\mathcal{X}$.
Notice, however, that, as has been recently shown in the bipartite
case in \cite{beyondlocc}, not all LOCC-measures remain monotonic
under non-entangling maps since the latter formalism allows state
conversions that the former does not. In the following, in
order to understand the ordering of resources induced by these theories,
we study which transformations are possible among pure states under
FSP and BSP maps. However, first one should point out that whenever
there exist maps $\Lambda$ and $\Lambda'$ in the corresponding class
of free operations such that $\Lambda(\psi)=\phi$ and $\Lambda'(\phi)=\psi$,
then the states $\psi$ and $\phi$ are equally resourceful and should
be regarded as equivalent in the corresponding theory. This is moreover
necessary so as to have a well-defined partial order. Hence, although
for simplicity we will talk about properties of states, one should
have in mind that one is actually speaking about equivalence classes.
Specifically, it is known that two pure states are interconvertible
by LOCC if and only if they are related by local unitary transformations
\cite{gingrich}. Interestingly, we will see that the equivalence
classes are wider in the resource theory of GME under BSP. It should
be stressed that, to our knowledge, this is the first time that a
resource theory of GME is formulated. Notice that the restriction
to LOCC can only have FS states as free states. Furthermore, allowing
a strict subset of parties to act jointly and classical communication
does not fit the bill either as $\mathcal{BS}$ is not closed under
these operations.

\paragraph{Non-triviality of the theories.}

Our first two results are valid in both the FSP and BSP regimes. Thus,
following the notation above, the two possible classes
of maps will be referred to as $\mathcal{X}$-preserving.

\begin{theorem}[\textbf{collapse of the SLOCC classes}]\label{noSLOCC}
In a resource theory of entanglement where the free operations are
$\mathcal{X}$-preserving maps, all resource states are interconvertible
with non-zero probability, i.e.\ given any pure $\psi_{1},\psi_{2}\notin\mathcal{X}$,
there exists a completely positive and trace non-increasing $\mathcal{X}$-preserving
map $\Lambda$ such that $\Lambda(\psi_{1})=p\psi_{2}$ with $p\in(0,1]$.
\end{theorem}

\begin{theorem}[\textbf{no isolation}]\label{noisolation} In a resource
theory of entanglement where the free operations are $\mathcal{X}$-preserving
maps, no resource state is isolated, i.e.\ given any pure $\psi_{1}\notin\mathcal{X}$
on $H$, there exists an inequivalent pure $\psi_{2}\notin\mathcal{X}$
on $H$ and a CPTP $\mathcal{X}$-preserving map $\Lambda$ such that
$\Lambda(\psi_{1})=\psi_{2}$. \end{theorem}

The full proof of these two results can be found in \cite{supmat}.
The proof of Theorem \ref{noSLOCC} is based on explicitly constructing a completely positive and trace non-increasing
$\mathcal{X}$-preserving map $\Lambda$ such that $\Lambda(\psi_{1})=p\psi_{2}$
whenever it holds that
\begin{equation}
p\leq\frac{1}{R_{\mathcal{X}}(\psi_{2})}\frac{G_{\mathcal{X}}(\psi_{1})}{1-G_{\mathcal{X}}(\psi_{1})}.\label{eq:p_noSFSPclasses}
\end{equation}
Since it can be guaranteed that $R_{\mathcal{X}}(\psi_{2})>0$ and
$0<G_{\mathcal{X}}(\psi_{1})<1$ when $\psi_{1},\psi_{2}\notin\mathcal{X}$,
there always exists%
{} $p\in(0,1]$ such that Eq.\ (\ref{eq:p_noSFSPclasses}) holds.
Theorem \ref{noisolation} then arises as a corollary as, given any
$\psi_{1}\notin\mathcal{X}$, continuity arguments show
that there always exists an inequivalent $\psi_{2}\notin\mathcal{X}$
with $R_{\mathcal{X}}(\psi_{2})$ small enough so that one can take
$p=1$ in Eq.\ (\ref{eq:p_noSFSPclasses}) and construct a CPTP map.

Theorem \ref{noSLOCC} proves that in our case there are no inequivalent
forms of entanglement. This is in sharp contrast to LOCC where, leaving
aside the case $H=(\mathbb{C}^{2})^{\otimes3}$, the state space splits
into a cumbersome zoology of infinitely many different SLOCC classes
of unrelated entangled states. Theorem \ref{noisolation} provides
the non-triviality of our theories. While almost all states turn out
to be isolated under LOCC \cite{isolationlast}, our classes of free
operations induce a meaningful partial order structure where, as in
the case of bipartite entanglement, every pure state can be transformed
into a more weakly entangled pure state. It is important to mention
that the result of \cite{isolationlast} proves generic isolation
when transformations are restricted among GME states 
with the rank of all $n$ single-particle reduced
density matrices equal to $d$. However, Theorem 2 still holds under
this restriction \cite{supmat}. 

\paragraph{Existence of a maximally resourceful state.}

Theorems \ref{noSLOCC} and \ref{noisolation} show that limitations
of the resource theory of multipartite entanglement under LOCC can
be overcome if one considers FSP or BSP operations instead. These
positive results raise the question of whether the induced structure
is powerful enough to have a unique multipartite maximally entangled
state. If this were so, our theories would point to a relevant
class of states that should be at the heart of the applications of
multipartite entanglement in a similar fashion to the maximally
entangled state in the bipartite case. In order to answer this question,
we provide first an unambiguous definition of a maximally resourceful
state which, on the analogy of the bipartite case, depends on the
number of parties $n$ and local dimension $d$: a state $\psi$ on
$H$ is the maximally resourceful state on $H$ if it can be transformed
by means of the free operations into any other state on $H$ \footnote{Notice that this already implies that there exists no free operation
that transforms an inequivalent state on $H$ into $\psi$.}. We analyze first the case of FSP operations, where we find a negative
answer to the above question.

\begin{theorem}\label{fspth} In the resource theory of entanglement
where the free operations are FSP maps, there exists no maximally
entangled state on $H=(\mathbb{C}^{2})^{\otimes3}$. \end{theorem}

Although the details of the proof are given in \cite{supmat}, we
outline here its structure. First, we use that if a maximally entangled
state in this case existed, it would need to be the W state $|W\rangle=(|001\rangle+|010\rangle+|100\rangle)/\sqrt{3}$.
This is because it has been shown in \cite{Wmaxgeom} that the W state
is the unique state in this Hilbert space that achieves the maximal
possible value of $G_{\mathcal{FS}}$, which we have shown above to
be an FSP-measure. Thus, if there existed a maximally entangled state,
it would be necessary that the W state could be transformed by FSP
into any other state. However, we show that there exists no FSP map
transforming the W state into the GHZ state ($|GHZ(3,2)\rangle$ in
Eq.\ (\ref{ghznd}) below). To verify this last claim, it suffices
to find an FSP-measure $E$ such that $E(GHZ)>E(W)$. However, as
discussed above, not many FSP-measures are known and, as with the geometric measure, it is also known that the relative entropy
of entanglement of the W is larger than that of the GHZ state \cite{relativeentropyWGHZ}.
This leaves us then with the robustness measure $R_{\mathcal{FS}}$,
for which we are able to show that $R_{\mathcal{FS}}(W)=R_{\mathcal{FS}}(GHZ)=2$.
This alone does not forbid that $W\to_{FSP}GHZ$ but from the insight
developed in computing these quantities, an
obstruction to such transformation can be found even though they are
equally robust. It is worth mentioning that, to our knowledge, this
is the first time that the robustness is computed for multipartite
states and we have reasons to conjecture that the W and GHZ states
attain its maximal value on $H$, being the only states that do so.

Theorem \ref{fspth} forbids then the existence of a multipartite
maximally entangled state under FSP in the simplest case of $H=(\mathbb{C}^{2})^{\otimes3}$.
However, it is instructive to compare with the LOCC scenario since
these values of $n$ and $d$ make up the only case where no state
is isolated in the latter formalism (aside from the bipartite
case). We show in \cite{supmat} that the W and GHZ states can be
transformed by FSP operations into states that are not obtainable
from any other 3-qubit states by LOCC. These states might be chosen
to lie in different SLOCC classes, so, additionally, this provides
an explicit example of deterministic FSP conversions among states
in different SLOCC classes.

Finally, we study the resource theory under BSP operations where,
remarkably, we find a unique maximally GME state
for any value of $n$ and $d$, given by the generalized
GHZ state
\begin{equation}
|GHZ(n,d)\rangle=\frac{1}{\sqrt{d}}\sum_{i=1}^{d}|i\rangle^{\otimes n}.\label{ghznd}
\end{equation}
\begin{theorem}\label{bspth} In the resource theory of entanglement
where the free operations are BSP maps, there exists a maximally GME
state on every $H$. Namely, $\forall|\psi\rangle\in(\mathbb{C}^{d})^{\otimes n}$,
there exists a CPTP BSP map $\Lambda$ such that $\Lambda(GHZ(n,d))=\psi$.
\end{theorem}

The complete proof of this result is given in \cite{supmat}. The
main idea behind it is to use again the construction of the proof
of Theorems \ref{noSLOCC} and \ref{noisolation}, that shows that
there is a CPTP BSP map $\Lambda$ such that $\Lambda(GHZ(n,d))=\psi$
if $R_{\mathcal{BS}}(\psi)\leq G_{\mathcal{BS}}(GHZ(n,d))/(1-G_{\mathcal{BS}}(GHZ(n,d)))$
(cf.\ Eq.\ (\ref{eq:p_noSFSPclasses})). However, unlike for
the FS case, $G_{\mathcal{BS}}$ is straightforward to compute \cite{puregbs}
in terms of the Schmidt decomposition across every possible bipartite
splitting of the parties $M|\bar{M}$ ($|\psi\rangle=\sum_{i}\sqrt{\lambda_{i}^{M|\bar{M}}}|i\rangle_{M}|i\rangle_{\bar{M}}$)
as
\begin{equation}
G_{\mathcal{BS}}(\psi)=1-\max_{M\subsetneq[n]}\lambda_{1}^{M|\bar{M}},\label{gbs}
\end{equation}
where $\lambda_{1}^{M|\bar{M}}$ is the largest Schmidt coefficient
of $\psi$ in the corresponding splitting. This immediately shows
that the generalized GHZ state has maximal value of the geometric
measure, $G_{\mathcal{BS}}(GHZ(n,d))=(d-1)/d$. Finally, a simple
estimate shows that $R_{\mathcal{BS}}(\psi)\leq d-1$ $\forall|\psi\rangle\in(\mathbb{C}^{d})^{\otimes n}$,
which leads to the desired result.

It follows from the proof that it suffices to have maximal
$G_{\mathcal{BS}}$ to be convertible to any other state by BSP operations.
Thus, any state fulfilling that $G_{\mathcal{BS}}=(d-1)/d$ must automatically
maximize any other BSP-measure. More importantly, this also shows
that any two states achieving this value of the geometric measure
are deterministically interconvertible by BSP operations and, therefore,
belong to the same GME-equivalence class despite potentially not
being related by local unitary transformations. An example of such
class when $d=2$ are GME graph states for which it is known that
$G_{\mathcal{BS}}=1/2$ \cite{Toth_detectingGME}. Hence, all graph
states including the generalized GHZ state are in the equivalence class
of the maximally GME state in this theory. It is remarkable to find
that this very relevant family of states \cite{graph} in quantum
computation and error correction has this feature in a resource theory
of GME and we believe this is worth further research. Another
previously considered family of states that belongs to this equivalence
class is that of absolutely maximally entangled (AME) states \cite{ame},
which is defined as those states for which all reduced density matrices
are proportional to the identity in the maximum possible
dimensions. It follows from Eq.\ (\ref{gbs}) that for all AME states
it holds that $G_{\mathcal{BS}}=(d-1)/d$ (for those values of $n$
and $d$ for which they exist). Equation (\ref{gbs}) also tells us
that a necessary condition for a state to be in the equivalence class
of the maximally GME state is that all single-particle reduced density
matrices must be proportional to the $d$-dimensional identity. However,
this condition is not sufficient: the state in $(\mathbb{C}^{2})^{\otimes4}$
$|\phi\rangle=\sqrt{p}|\phi^{+}\rangle_{12}|\phi^{+}\rangle_{34}+\sqrt{1-p}|\phi^{-}\rangle_{12}|\phi^{-}\rangle_{34}$
($|\phi^{\pm}\rangle=(|00\rangle\pm|11\rangle)/\sqrt{2}$) is a GME
state (if $p\neq0,1$) with this property but $G_{\mathcal{BS}}(\phi)<1/2$
(if $p\neq1/2$).

\paragraph{Conclusions.}
We have shown that non-trivial (i.e.\ without isolation) resource theories of multipartite entanglement are possible in which moreover inequivalent forms of entanglement do not exist. However, no resource theory of non-full-separability can have a maximally
entangled state for 3-qubit states since this is not possible under FSP transformations, the largest conceivable class of free operations (future work should study whether this no-go result generalizes to other values of $n$ and $d$). On the other hand, the BSP paradigm induces a resource theory of GME with a maximally resourceful state. Given this positive result, it would be interesting to analyze further features of this theory and, in particular, whether an operational grounding to this conceptually satisfying structure can be found. We also note that GME does not fulfill Axiom 1 of \cite{brandaogour}, so it is open whether an asymptotically reversible theory of this resource is possible.
\begin{acknowledgments}
This research was funded by the Spanish MINECO through grant MTM2017-88385-P
and by the Comunidad de Madrid through grant QUITEMAD+CMS2013/ICE-2801.
We also acknowledge funding from SEV-2015-0554-16-3 (PCT and CP),
\textquotedblleft Ram\'on y Cajal program\textquotedblright{} RYC-2012-10449
(CP) and the Spanish MINECO grant MTM2017-84098-P (JIdV).
\end{acknowledgments}


\cleardoublepage1
\onecolumngrid
\setcounter{thm}{0}
\setcounter{lem}{0}
\setcounter{prop}{0}
\setcounter{equation}{0}
\begin{center}
\large{\textbf{Supplemental material}}
\end{center}
\twocolumngrid

\title{A resource theory of entanglement with a unique multipartite \\
maximally entangled state: Supplemental material}

\author{Patricia Contreras-Tejada}

\affiliation{Instituto de Ciencias Matemáticas, E-28049 Madrid, Spain}

\author{Carlos Palazuelos}

\affiliation{Departamento de Análisis Matemático y Matemática Aplicada, Universidad
Complutense de Madrid, E-28040 Madrid, Spain}

\affiliation{Instituto de Ciencias Matemáticas, E-28049 Madrid, Spain}

\author{Julio I. de Vicente}

\affiliation{Departamento de Matemáticas, Universidad Carlos III de Madrid, E-28911,
Legan\'es (Madrid), Spain}

\maketitle
We prove the theorems introduced in the main text. For the reader's
convenience, we provide the necessary definitions and restate the
theorems.

Throughout the proof we will use repeatedly that, if $\rho_{1}$ and
$\rho_{2}$ are density matrices, the map
\begin{equation}
\Lambda(\rho)=\textnormal{tr}(A\rho)\rho_{1}+\textnormal{tr}[(\mathbbm1-A)\rho]\rho_{2}\label{eq:lambdacptp}
\end{equation}
is CPTP if $0\leq A\leq\mathbbm1$ (see e.g. \cite{chitambar_entanglement_2017}).

We say that a functional $E$ taking operators on $H$ to non-negative
real numbers is an FSP-measure (BSP-measure) if $E(\rho)\geq E(\Lambda(\rho))$
for every state $\rho$ and FSP (BSP) map $\Lambda$. Measures of
the form 
\begin{equation}
E_{\mathcal{X}}(\rho)=\inf_{\sigma\in\mathcal{X}}E(\rho||\sigma),\label{measures}
\end{equation}
where $\mathcal{X}$ stands for either $\mathcal{FS}$ or $\mathcal{BS}$,
have the corresponding monotonicity property as long as the distinguishability
measure $E(\rho||\sigma)$ is contractive, i.e.\ $E(\Lambda(\rho)||\Lambda(\sigma))\leq E(\rho||\sigma)$
for every $\rho,\sigma\in H$ and every CPTP map $\Lambda$.

As already explained in the main text, two FSP- (BSP-)measures that
will play a key role in developing the resource theory of FSP (BSP)
operations are the geometric measure and the robustness. For the reader's
convenience, we recall their definitions. The robustness is given
by
\begin{equation}
R_{\mathcal{X}}(\cdot)=\min_{\sigma\in\mathcal{X}}R(\cdot||\sigma)\label{eq:robdefn}
\end{equation}
where
\begin{equation}
R(\rho||\sigma)=\min\left\{ s:\frac{\rho+s\sigma}{1+s}\in\mathcal{X}\right\} \,,\label{eq:relrobdefn}
\end{equation}
and the geometric measure, which we only need to consider here for
pure states, boils down to

\begin{equation}
G_{\mathcal{X}}(\cdot)=1-\left(\max_{\ket{\phi}\in\mathcal{X}}\left|\bra{\phi}\ket{\cdot}\right|\right)^{2}\,.
\end{equation}

\section{Non-triviality of the theories}
\begin{thm}
\label{thm:no-slocc} In a resource theory of entanglement where the
free operations are $\mathcal{X}$-preserving maps, all resource states
are interconvertible with non-zero probability, i.e.\ given any pure
$\psi_{1},\psi_{2}\notin\mathcal{X}$, there exists a completely positive
and trace non-increasing $\mathcal{X}$-preserving map $\Lambda$
such that $\Lambda(\psi_{1})=p\psi_{2}$ with $p\in(0,1]\,$. 
\end{thm}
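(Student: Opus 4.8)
The plan is to exhibit $\Lambda$ explicitly as a measure-and-prepare map
\[\Lambda(\rho)=\textnormal{tr}(A\rho)\,\psi_2+\textnormal{tr}(B\rho)\,\sigma_0,\]
where $\sigma_0\in\mathcal{X}$ is a fixed free state and $A,B\geq 0$ with $A+B\leq\mathbbm1$. Any map of this form is automatically completely positive, being a sum of the CP maps $\rho\mapsto\textnormal{tr}(A\rho)\psi_2$ and $\rho\mapsto\textnormal{tr}(B\rho)\sigma_0$, and since $\textnormal{tr}[\Lambda(\rho)]=\textnormal{tr}[(A+B)\rho]\leq\textnormal{tr}(\rho)$ it is trace non-increasing. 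The whole problem thus reduces to choosing $A$, $B$, $\sigma_0$ so that (i) $\Lambda$ maps $\psi_1$ to a positive multiple of $\psi_2$, and (ii) $\Lambda$ is $\mathcal{X}$-preserving.

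For (i), since $\Lambda(\psi_1)=\textnormal{tr}(A\psi_1)\psi_2+\textnormal{tr}(B\psi_1)\sigma_0$, I would impose $\textnormal{tr}(B\psi_1)=0$ and $\textnormal{tr}(A\psi_1)=p>0$, giving $\Lambda(\psi_1)=p\,\psi_2$. For (ii), feeding an arbitrary $\sigma\in\mathcal{X}$, I would require that the subnormalized output $\textnormal{tr}(A\sigma)\psi_2+\textnormal{tr}(B\sigma)\sigma_0$ lie in the cone over $\mathcal{X}$. Writing $s_0:=R(\psi_2||\sigma_0)$, the robustness definition~(\ref{eq:relrobdefn}) guarantees $(\psi_2+s\sigma_0)/(1+s)\in\mathcal{X}$ for every $s\geq s_0$ (a convex combination of two free states), so after normalizing the output, condition (ii) is \emph{equivalent} to the single family of linear inequalities $\textnormal{tr}(B\sigma)\geq s_0\,\textnormal{tr}(A\sigma)$ holding for all $\sigma\in\mathcal{X}$.

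The concrete choice I would make is $A=\epsilon\,\psi_1$ and $B=c(\mathbbm1-\psi_1)$ with scalars $\epsilon,c>0$, and $\sigma_0$ a free state achieving the finite robustness, so $s_0=R_{\mathcal{X}}(\psi_2)<\infty$. Then $\textnormal{tr}(B\psi_1)=0$ and $\textnormal{tr}(A\psi_1)=\epsilon$, settling (i) with $p=\epsilon$. Setting $t:=\bra{\psi_1}\sigma\ket{\psi_1}$, the inequality of (ii) becomes $c(1-t)\geq s_0\epsilon\,t$, whose right-hand side is increasing in $t$. Because $\psi_1\notin\mathcal{X}$, the crucial fact is $t\leq\max_{\sigma\in\mathcal{X}}\bra{\psi_1}\sigma\ket{\psi_1}=1-G_{\mathcal{X}}(\psi_1)=:t_{\max}<1$, the maximum being attained at a pure extreme point of $\mathcal{X}$; hence (ii) holds for every free $\sigma$ as soon as $c\geq s_0\epsilon\,t_{\max}/(1-t_{\max})$. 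Finally $A+B$ has eigenvalues $\epsilon$ and $c$, so $A+B\leq\mathbbm1$ merely needs $\epsilon,c\leq1$.

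I expect the crux to be reconciling (i) and (ii). The naive deterministic attempt (take $\Lambda$ CPTP with $A=\psi_1$, so $p=1$) would force $t_{\max}\leq 1/(1+s_0)$, which can simply fail when $\psi_1$ is weakly entangled (large $t_{\max}$) while $\psi_2$ is highly robust (large $s_0$). The role of trace non-increase is exactly to break this deadlock: shrinking the success weight $\epsilon$ drives the required lower bound $s_0\epsilon\,t_{\max}/(1-t_{\max})$ below $1$, opening a nonempty window $[\,s_0\epsilon\,t_{\max}/(1-t_{\max}),\,1\,]$ for $c$ and completing the construction. The only external inputs are the finiteness of $R_{\mathcal{X}}(\psi_2)$ (from $\mathcal{X}$ containing a ball around the maximally mixed state) and the strict bound $t_{\max}<1$ (from $G_{\mathcal{X}}(\psi_1)>0$), both of which hold precisely because $\psi_1,\psi_2\notin\mathcal{X}$.
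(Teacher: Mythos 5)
Your proof is correct and is essentially the paper's own argument: the paper's map $\Lambda(\eta)=p\tr(\psi_{1}\eta)\psi_{2}+\tr[(\mathbbm1-\psi_{1})\eta]\rho_{\mathcal{X}}$ is exactly your measure-and-prepare construction with $A=p\,\psi_{1}$, $B=\mathbbm1-\psi_{1}$ (i.e.\ $c=1$) and $\sigma_{0}=\rho_{\mathcal{X}}$ the robustness-achieving free state, and $\mathcal{X}$-preservation is verified there from the same two ingredients, namely the robustness of $\psi_{2}$ and the bound $\tr(\psi_{1}\sigma)\leq1-G_{\mathcal{X}}(\psi_{1})<1$ for all $\sigma\in\mathcal{X}$. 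The only difference, leaving $c$ free rather than setting $c=1$, is cosmetic.
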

\begin{proof}
Notice that, since $\psi_{1}\,,\psi_{2}\notin\mathcal{X}$ and both
the geometric measure and the robustness are faithful measures \cite{wei_geometric_2003,vidal_robustness_1999},
$R_{\mathcal{X}}(\psi_{2})\,,\:G_{\mathcal{X}}(\psi_{1})>0$. Also,
$G_{\mathcal{X}}(\psi_{1})<1$ because the fully (bi-)separable states
span the whole Hilbert space. Pick $p\in(0,1]$ such that 
\begin{equation}
p\leq\frac{1}{R_{\mathcal{X}}(\psi_{2})}\frac{G_{\mathcal{X}}(\psi_{1})}{1-G_{\mathcal{X}}(\psi_{1})}\label{eq:p_noSFSPclasses}
\end{equation}
and let 
\begin{equation}
\Lambda(\eta)=p\tr(\psi_{1}\eta)\psi_{2}+\tr\left[(\mathbbm1-\psi_{1})\eta\right]\rho_{\mathcal{\mathcal{X}}}\,.\label{eq:lambdaprob}
\end{equation}
Here $\rho_{\mathcal{\mathcal{X}}}\in\mathcal{X}$ is the state which
gives the corresponding robustness of $\psi_{2}\,,$ i.e., $R_{\mathcal{X}}(\psi_{2})=R(\psi_{2}||\rho_{\mathcal{X}})$---cf.
equation (\ref{eq:relrobdefn}). (Note that $\Lambda$ can be completed
to a CPTP $\mathcal{X}$-preserving map by adding a term of the form
$\Lambda'(\eta)=(1-p)\tr(\psi_{1}\eta)\rho_{\mathcal{\mathcal{X}}}\,.)$
Then $\Lambda(\psi_{1})=p\psi_{2}$ and it remains to be shown that
$\Lambda$ is $\mathcal{X}$-preserving. Let $\sigma\in\mathcal{\mathcal{X}}\,.$
Then 
\begin{equation}
\Lambda(\sigma)\propto\psi_{2}+\frac{1}{p}\left(\frac{1}{\tr(\psi_{1}\sigma)}-1\right)\rho_{\mathcal{\mathcal{X}}}\,,
\end{equation}
so $\Lambda(\sigma)/\tr(\Lambda(\sigma))\in\mathcal{\mathcal{X}}$
iff $\frac{1}{p}\left(\frac{1}{\tr(\psi_{1}\sigma)}-1\right)\geq R_{\mathcal{X}}(\psi_{2})\,.$
But this holds from equation \eqref{eq:p_noSFSPclasses} and using
$\tr(\psi_{1}\sigma)\leq1-G_{\mathcal{X}}(\psi_{1})$ $\forall\sigma\in\mathcal{X}\,.$
\end{proof}
\begin{thm}
\label{thm:nontrivialhierarchy} In a resource theory of entanglement
where the free operations are $\mathcal{X}$-preserving maps, no resource
state is isolated, i.e.\ given any pure $\psi_{1}\notin\mathcal{X}$
on $H$, there exists an inequivalent pure $\psi_{2}\notin\mathcal{X}$
on $H$ and a CPTP $\mathcal{X}$-preserving map $\Lambda$ such that
$\Lambda(\psi_{1})=\psi_{2}$.
\end{thm}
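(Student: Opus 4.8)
The plan is to reuse the measure-and-prepare construction of Theorem~\ref{thm:no-slocc}, but now with unit ``probability'' so that the resulting map is genuinely trace-preserving. Concretely, I would set
\begin{equation}
\Lambda(\eta)=\tr(\psi_{1}\eta)\,\psi_{2}+\tr\!\left[(\mathbbm1-\psi_{1})\eta\right]\rho_{\mathcal{X}}\,,
\end{equation}
where $\psi_{2}$ is a pure state to be chosen and $\rho_{\mathcal{X}}\in\mathcal{X}$ realises the robustness of $\psi_{2}$, i.e.\ $R_{\mathcal{X}}(\psi_{2})=R(\psi_{2}||\rho_{\mathcal{X}})$. Taking $A=\psi_{1}$ in \eqref{eq:lambdacptp} shows $\Lambda$ is CPTP, and since $\tr(\psi_{1}\psi_{1})=1$ one gets immediately $\Lambda(\psi_{1})=\psi_{2}$. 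Thus the whole problem reduces to choosing $\psi_{2}$ so that (i) $\Lambda$ is $\mathcal{X}$-preserving, (ii) $\psi_{2}\notin\mathcal{X}$, and (iii) $\psi_{2}$ is inequivalent to $\psi_{1}$.

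For (i) I would evaluate $\Lambda$ on an arbitrary $\sigma\in\mathcal{X}$: writing $q=\tr(\psi_{1}\sigma)$ gives $\Lambda(\sigma)=q\,\psi_{2}+(1-q)\rho_{\mathcal{X}}$, which coincides with $\tfrac{\psi_{2}+s\rho_{\mathcal{X}}}{1+s}$ for $s=\tfrac{1-q}{q}$. Because $\tfrac{\psi_{2}+s\rho_{\mathcal{X}}}{1+s}$ is, for every $s\geq R_{\mathcal{X}}(\psi_{2})$, a convex combination of the free state $\tfrac{\psi_{2}+R_{\mathcal{X}}(\psi_{2})\rho_{\mathcal{X}}}{1+R_{\mathcal{X}}(\psi_{2})}\in\mathcal{X}$ and $\rho_{\mathcal{X}}\in\mathcal{X}$, convexity of $\mathcal{X}$ yields $\Lambda(\sigma)\in\mathcal{X}$ as soon as $q\leq(1+R_{\mathcal{X}}(\psi_{2}))^{-1}$. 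Since the maximum of the linear functional $\sigma\mapsto\langle\psi_{1}|\sigma|\psi_{1}\rangle$ over the convex set $\mathcal{X}$ is attained at a pure free state and therefore equals $1-G_{\mathcal{X}}(\psi_{1})$, a sufficient condition for $\mathcal{X}$-preservation is
\begin{equation}
R_{\mathcal{X}}(\psi_{2})\leq\frac{G_{\mathcal{X}}(\psi_{1})}{1-G_{\mathcal{X}}(\psi_{1})}=:\tau\,,
\end{equation}
where $\tau>0$ because $0<G_{\mathcal{X}}(\psi_{1})<1$ (faithfulness, and spanning of $\mathcal{X}$, exactly as in Theorem~\ref{thm:no-slocc}).

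It then remains to produce a pure $\psi_{2}$ with $0<R_{\mathcal{X}}(\psi_{2})<\min\{\tau,R_{\mathcal{X}}(\psi_{1})\}$: the strict positivity gives (ii) by faithfulness, the bound by $\tau$ gives (i), and $R_{\mathcal{X}}(\psi_{2})<R_{\mathcal{X}}(\psi_{1})$ will give (iii). For this I would fix a pure free state $\phi_{0}\in\mathcal{X}$ and a continuous curve of unit vectors joining $\phi_{0}$ to $\psi_{1}$; along it $R_{\mathcal{X}}$ is continuous and takes the endpoint values $0$ and $R_{\mathcal{X}}(\psi_{1})>0$, so by the intermediate value theorem it attains every value of the interval $(0,R_{\mathcal{X}}(\psi_{1}))$, and I would pick one that is moreover below $\tau$. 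Inequivalence (iii) then follows from the monotonicity of $R_{\mathcal{X}}$ as an $\mathcal{X}$-measure: a conversion $\psi_{2}\to\psi_{1}$ by a CPTP $\mathcal{X}$-preserving map would force $R_{\mathcal{X}}(\psi_{2})\geq R_{\mathcal{X}}(\psi_{1})$, contradicting the choice, so no such map exists and $\psi_{1},\psi_{2}$ sit in different equivalence classes.

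The main obstacle I anticipate is precisely this existence step: one must guarantee a genuine resource state of arbitrarily small yet nonzero robustness lying strictly below $R_{\mathcal{X}}(\psi_{1})$, which rests on the continuity and faithfulness of $R_{\mathcal{X}}$ together with the monotonicity that certifies inequivalence. Once $\psi_{2}$ is in hand, the CPTP and $\mathcal{X}$-preserving verifications are essentially the computation already carried out in Theorem~\ref{thm:no-slocc}, specialised to the deterministic case.
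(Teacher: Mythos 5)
Your proposal is correct and follows essentially the same route as the paper's proof: the measure-and-prepare map of Theorem~\ref{thm:no-slocc} with $p=1$, the sufficient condition $R_{\mathcal{X}}(\psi_{2})\leq G_{\mathcal{X}}(\psi_{1})/(1-G_{\mathcal{X}}(\psi_{1}))$ for $\mathcal{X}$-preservation, continuity of the robustness to find a suitable $\psi_{2}$, and inequivalence certified by $\psi_{2}$ having strictly smaller robustness than $\psi_{1}$. You merely spell out some steps the paper leaves implicit (the convexity argument for $\mathcal{X}$-preservation and the intermediate-value-theorem argument guaranteeing $0<R_{\mathcal{X}}(\psi_{2})<\min\{\tau,R_{\mathcal{X}}(\psi_{1})\}$), which is a welcome but not substantively different elaboration.
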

\begin{proof}
Consider the map (\ref{eq:lambdaprob}) from the proof of Theorem
\ref{thm:no-slocc}. This map can be made deterministic if $R_{\mathcal{X}}(\psi_{2})$
is sufficiently smaller than $G_{\mathcal{X}}(\psi_{1})\,.$ Indeed,
if
\begin{equation}
\frac{1}{R_{\mathcal{X}}(\psi_{2})}\frac{G_{\mathcal{X}}(\psi_{1})}{1-G_{\mathcal{X}}(\psi_{1})}>1\,,\label{eq:nontrivialhierarchy}
\end{equation}
then we can pick $p=1$ in the map (\ref{eq:lambdaprob}) so $\Lambda$
is CPTP (see equation (\ref{eq:lambdacptp})). Since robustness is
a continuous function of the input state \cite{vidal_robustness_1999},
it can be arbitrarily close to zero and so there exists $\psi_{2}$
such that the above condition is fulfilled for any $\psi_{1}\,.$
Further, $\psi_{1},\psi_{2}$ are inequivalent if they have different
robustness, but $R(\psi_{2})$ can always be picked to be different
from $R(\psi_{1})$ and still satisfying equation (\ref{eq:nontrivialhierarchy}).
\end{proof}
The generic isolation result proven in Ref. \cite{sauerwein_transformations_2017}
holds when transformations are restricted among GME states fully supported
on $H$ (i.e. such that all $n$ single-particle reduced density matrices
have rank $d$). Importantly, $\psi_{1}$ and $\psi_{2}$ in Theorem
\ref{thm:nontrivialhierarchy} may both be fully supported on $H\,,$
in contrast to the LOCC scenario. It suffices to consider
\begin{equation}
\begin{aligned}\ket{\psi_{2}} & =\sqrt{1-\varepsilon}\ket{0}^{\otimes n}+\sqrt{\frac{\varepsilon}{d-1}}\ket{1}^{\otimes n}\\
 & +\dots+\sqrt{\frac{\varepsilon}{d-1}}\ket{d-1}^{\otimes n}
\end{aligned}
\end{equation}
as an example of a GME fully supported state on $H$ which meets the
requirements for small enough $\varepsilon\,.$

\section{FSP regime}
\begin{thm}
\label{thm:noFSPmes} In the resource theory of entanglement where
the free operations are FSP maps, there exists no maximally entangled
state on $H=(\mathbb{C}^{2})^{\otimes3}$.
\end{thm}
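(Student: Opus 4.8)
The plan is to exploit that a maximally entangled state must sit at the very top of the FSP-preorder, and hence must simultaneously saturate \emph{every} FSP-measure. Indeed, if $\psi$ were maximally entangled, then for every pure $\phi$ there would be a CPTP FSP map sending $\psi$ to $\phi$, so by monotonicity we would have $E(\psi)\geq E(\phi)$ for every FSP-measure $E$ and every $\phi$; that is, $\psi$ would be a global maximizer of $E$. It therefore suffices to produce two FSP-measures whose sets of maximizers are disjoint. I would use the two measures singled out above, the geometric measure $G_{\mathcal{FS}}$ and the robustness $R_{\mathcal{FS}}$, both of which are FSP-measures of the form \eqref{measures} (the latter because the underlying distinguishability measure is contractive).

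First I would pin down the maximizer of $G_{\mathcal{FS}}$ over pure states of $(\mathbb{C}^2)^{\otimes 3}$. Optimizing the maximal product overlap separately within the two SLOCC classes of genuinely tripartite entanglement, one finds that the W state $\ket{W}=(\ket{001}+\ket{010}+\ket{100})/\sqrt{3}$ achieves $G_{\mathcal{FS}}=5/9$, strictly above the GHZ value $1/2$, and that $\ket{W}$ is the unique maximizer up to local unitaries and permutations of the parties. Consequently, any maximally entangled state would have to coincide with $\ket{W}$ up to local unitaries and permutations.

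Next I would show that $\ket{W}$ does \emph{not} maximize $R_{\mathcal{FS}}$, by exhibiting a state with strictly larger robustness; the natural candidate is $\ket{\mathrm{GHZ}}=(\ket{000}+\ket{111})/\sqrt{2}$. Concretely, I would upper-bound $R_{\mathcal{FS}}(W)$ by writing down an explicit fully separable $\sigma$ together with the smallest $s$ for which $(\ket{W}\bra{W}+s\sigma)/(1+s)\in\mathcal{FS}$, and lower-bound $R_{\mathcal{FS}}(\mathrm{GHZ})$ through a suitable entanglement witness certifying how much separable weight is needed to kill the GHZ coherences. Since the only thing the argument requires is that \emph{some} state beat $\ket{W}$ in robustness, establishing $R_{\mathcal{FS}}(\mathrm{GHZ})>R_{\mathcal{FS}}(W)$ contradicts the maximality of $\ket{W}$ for $R_{\mathcal{FS}}$; and as a maximally entangled state would have to be $\ket{W}$, no such state can exist.

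The hard part will be the two quantitative inputs. Establishing that $\ket{W}$ is the \emph{unique} global maximizer of the geometric measure — optimal over all of $(\mathbb{C}^2)^{\otimes 3}$, including non-symmetric states, and not merely among symmetric or W-class states — is the delicate step that makes the ``$\psi$ must be $\ket{W}$'' conclusion watertight; uniqueness is essential, since otherwise a maximally entangled state might hide among other geometric-measure maximizers. The robustness comparison is in principle a finite convex optimization, but turning it into a certified strict inequality requires an explicit separable decomposition on the W side and a validated witness on the GHZ side.
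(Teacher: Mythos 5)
Your overall framework is sound and your first step coincides with the paper's: any maximally entangled state must maximize every FSP-measure, and the unique maximizer of $G_{\mathcal{FS}}$ on $(\mathbb{C}^{2})^{\otimes3}$ is the W state (with $G_{\mathcal{FS}}(W)=5/9>1/2=G_{\mathcal{FS}}(GHZ)$), so the W state is the only candidate. But your second step rests on a quantitative claim that is false: the paper computes both robustnesses exactly (Lemmas \ref{lem:robGHZ} and \ref{lem:robW}) and finds
\begin{equation}
R_{\mathcal{FS}}(GHZ)=R_{\mathcal{FS}}(W)=2\,,
\end{equation}
so the strict inequality $R_{\mathcal{FS}}(GHZ)>R_{\mathcal{FS}}(W)$ that your argument needs does not hold. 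Since the two robustness values tie, monotonicity of $R_{\mathcal{FS}}$ is perfectly consistent with an FSP map taking $W$ to $GHZ$, and no comparison of monotone \emph{values} between these two states can finish the proof. (Whether some other three-qubit state has robustness exceeding $2$, which could in principle salvage your strategy, is not established anywhere in the paper and is not obvious.)

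The paper's way around this is precisely to exploit the \emph{structure of the equality case} rather than the values. Assuming an FSP map $\Lambda$ with $\Lambda(W)=GHZ$, it takes the fully separable state $\eta=\frac{1}{3}W+\frac{2}{3}\tau$ that witnesses $R_{\mathcal{FS}}(W)\leq2$ (equations \eqref{eq:robW}--\eqref{eq:sigmaconvcomb}) and pushes it through $\Lambda$ followed by the GHZ twirling $T_{GHZ}$, obtaining $\frac{1}{3}GHZ+\frac{2}{3}T_{GHZ}(\Lambda(\tau))$, which would have to be fully separable. It then proves a uniqueness statement: the only GHZ-symmetric fully separable state $\upsilon$ with $\frac{1}{3}GHZ+\frac{2}{3}\upsilon\in\mathcal{FS}$ is $\rho(0,1/4,3/4)$, which satisfies $\tr(\upsilon\,GHZ)=0$. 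But $\tau$ has strictly positive W weight, so $T_{GHZ}(\Lambda(\tau))$ has strictly positive GHZ component and cannot equal $\upsilon$ --- a contradiction. So the missing idea in your proposal is this rigidity (uniqueness of the optimal separable admixture) at the point where the two robustnesses coincide; without it, or without a genuinely different second monotone that W provably fails to maximize, the argument does not close.
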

To prove this theorem, it is useful to introduce the following two
lemmas in order to compute the robustness of the W and GHZ states.
\begin{lem}
\label{lem:robGHZ} $R_{\mathcal{FS}}(GHZ)=2\,.$ 
\end{lem}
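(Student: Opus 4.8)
The plan is to prove $R_{\mathcal{FS}}(GHZ)\le 2$ and $R_{\mathcal{FS}}(GHZ)\ge 2$ separately, writing $\ket{GHZ}=\tfrac{1}{\sqrt2}(\ket{000}+\ket{111})$ and abbreviating the coherence operator $C=\ket{000}\bra{111}+\ket{111}\bra{000}$. For the upper bound I would exhibit explicit fully separable $\sigma,\tau$ with $GHZ+2\sigma=3\tau$, which by (\ref{eq:relrobdefn})--(\ref{eq:robdefn}) certifies $R_{\mathcal{FS}}(GHZ)\le2$ at once. The only obstruction to separability of a noisy $GHZ$ is its $\ket{000}\bra{111}$ coherence, and a short argument bounds $|\bra{000}\mu\ket{111}|\le\tfrac18$ for every $\mu\in\mathcal{FS}$. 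This coherence is supplied by the product states $\ket{\chi(\phi)}=2^{-1/2}(\ket0+e^{i\phi}\ket1)^{\otimes3}$: twirling $\ket{\chi(\phi)}\bra{\chi(\phi)}$ over the $GHZ$-symmetry group (local phase gates with phases summing to zero, qubit permutations and $X^{\otimes3}$) preserves separability and collapses the state to $\tfrac18(\mathbbm1+\cos(3\phi)\,C)$. Taking $\phi=\pi/3$ gives the separable $\sigma=\tfrac18(\mathbbm1-C)$, and one then checks that $\tau=\tfrac13(GHZ+2\sigma)$ equals the manifestly separable mixture $\tfrac13\cdot\tfrac12(\ket{000}\bra{000}+\ket{111}\bra{111})+\tfrac23\cdot\tfrac18(\mathbbm1+C)$, closing the bound.

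For the lower bound I would produce a single operator $Z$ with $0\le\bra{\phi}Z\ket{\phi}\le1$ for every product $\ket{\phi}$ (equivalently $0\le\tr(Z\mu)\le1$ for all $\mu\in\mathcal{FS}$) and with $\tr(Z\,GHZ)=-2$. Given such a $Z$, applying $\tr(Z\,\cdot)$ to any decomposition $GHZ+s\sigma=(1+s)\tau$ with $\sigma,\tau\in\mathcal{FS}$ gives $-2=(1+s)\tr(Z\tau)-s\,\tr(Z\sigma)\ge 0-s$, whence $s\ge2$ and therefore $R_{\mathcal{FS}}(GHZ)\ge2$. Since both the state and the constraint set are $GHZ$-symmetric, I may search for $Z$ among symmetric operators $a(\ket{000}\bra{000}+\ket{111}\bra{111})+bC+c(\mathbbm1-\ket{000}\bra{000}-\ket{111}\bra{111})$; a small optimisation maximising $-\tr(Z\,GHZ)=-(a+b)$ then selects $a=0$, $c=\tfrac23$, $b=-2$, i.e.\ $Z=\tfrac23(\mathbbm1-\ket{000}\bra{000}-\ket{111}\bra{111})-2C$.

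The main obstacle is verifying that this $Z$ is feasible for \emph{all} product states, not only the extremal ones used to guess it. Parametrising $\ket\phi=\bigotimes_j(\cos\tfrac{\theta_j}2\ket0+e^{i\varphi_j}\sin\tfrac{\theta_j}2\ket1)$ and setting $p=\prod_j\cos\tfrac{\theta_j}2$, $q=\prod_j\sin\tfrac{\theta_j}2$, the value is $\bra\phi Z\ket\phi=\tfrac23-\tfrac23(p^2+q^2)-4pq\cos(\textstyle\sum_j\varphi_j)$. With the worst-case phase the positivity constraint reduces to $p^2+q^2+6pq\le1$, i.e.\ $\prod_j c_j^2+\prod_j s_j^2+6\prod_j c_js_j\le\prod_j(c_j^2+s_j^2)$ with $c_j=\cos\tfrac{\theta_j}2$, $s_j=\sin\tfrac{\theta_j}2$, which I would prove by AM-GM applied to the six ``mixed'' terms in the expansion of the right-hand side (each $c_j^2$ and $s_j^2$ occurs in exactly three of them, so their geometric mean is exactly $\prod_j c_js_j$); the upper constraint $12pq-2(p^2+q^2)\le1$ follows from $pq=\prod_j\tfrac12\sin\theta_j\le\tfrac18$ together with $p^2+q^2\ge2pq$. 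The analogous primal subtlety—ensuring $\sigma$ and $\tau$ are genuinely separable rather than merely positive—is precisely why the construction is routed through twirled product states, for which separability is automatic.
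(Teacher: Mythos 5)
Your proof is correct, and in fact the certificates you construct are \emph{identical} to the paper's, even though your verification of them is genuinely different. Writing $P=\ket{000}\!\bra{000}+\ket{111}\!\bra{111}$, your separable state $\sigma=\tfrac18(\mathbbm1-C)$ is exactly the paper's $\rho(0,1/4,3/4)$ (since $\tfrac14 GHZ_-=\tfrac18(P-C)$), your $\tau$ is exactly $\rho(1/3,1/6,1/2)$, and your witness $Z=\tfrac23(\mathbbm1-P)-2C$ equals the paper's $\mathcal{W}=\tfrac23\mathbbm1-\tfrac83\,GHZ+\tfrac43\,GHZ_-$. The difference lies in how the key facts are established. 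For separability of $\sigma$ and $\tau$, the paper invokes the criterion of \cite{dur_separability_1999} that $\rho(\lambda^+,\lambda^-,\lambda)$ is FS iff $|\lambda^+-\lambda^-|\le\lambda/3$; you instead realize $\sigma$ and $\tau$ as twirls of the product states $2^{-1/2}(\ket{0}+e^{i\phi}\ket{1})^{\otimes3}$, which makes separability manifest and self-contained (your twirl computation checks out: phase-averaging with $\sum_j\theta_j=0$ kills all coherences except $\ket{000}\!\bra{111}$ and its conjugate, and the $X^{\otimes3}$ average produces the $\cos(3\phi)$ factor, so $\phi=\pi/3$ and $\phi=0$ give $\tfrac18(\mathbbm1\mp C)$ respectively). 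For the lower bound, the paper cites the dual characterization of the robustness from \cite{brandao_quantifying_2005} and reduces feasibility of the witness to the four vertices of the FS GHZ-symmetric polytope; you re-derive the dual inequality from scratch (the three-line argument from $GHZ+s\sigma=(1+s)\tau$) and verify feasibility over \emph{all} product states directly, via $\bra{\phi}Z\ket{\phi}=\tfrac23-\tfrac23(p^2+q^2)-4pq\cos(\sum_j\varphi_j)$, AM-GM on the six mixed terms of $\prod_j(c_j^2+s_j^2)$ (each $c_j^2$ and $s_j^2$ indeed occurs in exactly three of them, so their geometric mean is $pq$), and $pq\le 1/8$ for the upper constraint; all of these steps are valid. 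What your route buys is independence from the cited structural results on GHZ-symmetric states and witness-robustness duality, at the cost of some explicit computation; what the paper's route buys is brevity and reusability, since the same GHZ-symmetric machinery (the polytope and the criterion $|\lambda^+-\lambda^-|\le\lambda/3$) is exploited again in the proof of Theorem 3, where uniqueness of the state $\upsilon$ with $\tfrac13 GHZ+\tfrac23\upsilon\in\mathcal{FS}$ is needed.
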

\begin{proof}
The robustness can be bounded from above from the definition (equations
(\ref{eq:robdefn}), (\ref{eq:relrobdefn})), as any fully separable
state which is a convex combination of the GHZ state with a fully
separable state will give an upper bound to the robustness. Ref. \cite{brandao_quantifying_2005}
provides a dual characterization in terms of entanglement witnesses
which we use to bound the robustness from below:
\begin{equation}
R_{\mathcal{FS}}(\rho)=\max\left\{ 0,-\min_{\mathcal{W}\in\mathcal{M}}\tr(\mathcal{W}\rho)\right\} \,.\label{eq:witnessrobustness}
\end{equation}
 A witness for a state $\rho$ is an operator $\mathcal{W}$ such
that $\tr(\mathcal{W}\sigma)\geq0$ for all $\sigma\in\mathcal{FS}$
and $\tr(\mathcal{W}\rho)<0\,.$ If the witness also satisfies $\tr(\mathcal{W}\sigma)\leq1$
for all $\sigma\in\mathcal{FS}$ (which defines the set $\mathcal{M}$
above), then $-\tr(\mathcal{W}\rho)$ is a lower bound to the robustness.

First, we show $R_{\mathcal{FS}}(GHZ)\leq2\,.$ We will use the following
notation as a means to characterize full separability of certain states
(this is a simplified version of the separability criterion in \cite[\S 2.1]{dur_separability_1999}):
a state of the form 
\begin{equation}
\begin{aligned}\rho(\lambda^{+},\lambda^{-},\lambda)=\\
\lambda^{+}GHZ\,+\, & \lambda^{-}GHZ_{-}\,+\frac{\lambda}{6}\sum_{i=001}^{110}\ketbra{i}\,,
\end{aligned}
\label{eq:DCTform}
\end{equation}
where $\ket{GHZ_{-}}=(\ket{000}-\ket{111})/\sqrt{2}$ and the summation
index $i$ ranges from 001 to 110 in binary, is fully separable iff
\begin{equation}
|\lambda^{+}-\lambda^{-}|\leq\lambda/3\,.\label{eq:critsep}
\end{equation}
We must also have $\lambda^{+}+\lambda^{-}+\lambda=1$ for normalization,
and $\lambda^{\pm},\lambda\geq0$ for $\rho(\lambda^{+},\lambda^{-},\lambda)$
to be positive. Thus, the set of fully separable states of the form
(\ref{eq:DCTform}) is a polytope, and this property will be used
later.

Consider the following state: 
\begin{equation}
\frac{1}{3}\left(GHZ+2\rho\left(0,\frac{1}{4},\frac{3}{4}\right)\right)=\rho\left(\frac{1}{3},\frac{1}{6},\frac{1}{2}\right)\,,\label{eq:robGHZ}
\end{equation}
It is straightforward to check that both $\rho\left(0,\frac{1}{4},\frac{3}{4}\right)$
and $\rho\left(\frac{1}{3},\frac{1}{6},\frac{1}{2}\right)$ satisfy
(\ref{eq:critsep}) with equality, so $R_{\mathcal{FS}}(GHZ)\leq2\,.$

Next, we show $R_{\mathcal{FS}}(GHZ)\geq2\,.$ Let

\begin{equation}
\mathcal{W}=\frac{2}{3}\mathbbm1-\frac{8}{3}GHZ+\frac{4}{3}GHZ_{-}\label{eq:robwitGHZ}
\end{equation}
be a candidate witness for this purpose. To show $0\leq\textnormal{tr}(\mathcal{W}\sigma)\leq1$
for all fully separable states $\sigma$, it is enough to restrict
to states $\sigma$ of the form (\ref{eq:DCTform}), as can be shown
by considering the twirling map $T_{GHZ}$ onto the GHZ-symmetric
subspace. This map is defined in \cite{hayashi_bounds_2006}, but
we will only need the following properties: it is FSP and self-dual,
it maps all states onto states of the form (\ref{eq:DCTform}), i.e.
\begin{equation}
T_{GHZ}(\tau)=\rho(\lambda^{+},\lambda^{-},\lambda)
\end{equation}
for every state $\tau$ on $H$ and for some $\lambda^{\pm},\lambda$
and, moreover, these states are fixed points: $T_{GHZ}(\rho(\lambda^{+},\lambda^{-},\lambda))=\rho(\lambda^{+},\lambda^{-},\lambda)$
for all $\lambda^{\pm},\lambda\,.$ In particular, $T_{GHZ}(GHZ)=GHZ$
and the witness $\mathcal{W}$ in equation \eqref{eq:robwitGHZ} is
such that $T_{GHZ}\,(\mathcal{W})=\mathcal{W}\,,$ and so
\begin{equation}
\tr(\mathcal{W}\sigma)=\tr(T_{GHZ}\,(\mathcal{W})\,\sigma)=\tr(\mathcal{W}\,T_{GHZ}(\sigma))
\end{equation}
holds for any state $\sigma\,.$ Therefore, if $0\leq\textnormal{tr}(\mathcal{W}\sigma)\leq1$
holds for all $\sigma\in\mathcal{FS}$ such that $T_{GHZ}(\sigma)=\sigma\,,$
i.e. those of the form \eqref{eq:DCTform} where (\ref{eq:critsep})
holds \cite{eltschka_entanglement_2012,eltschka_optimal_2013}, then
it is guaranteed to hold for any $\sigma\in\mathcal{FS}\,.$

As the space of fully separable GHZ-symmetric states is a polytope,
it is enough to show that $0\leq\trace(\mathcal{W}\sigma)\leq1$ at
the vertices of the polytope, which are (cf. \cite{eltschka_entanglement_2012,eltschka_optimal_2013}):
\begin{align}
\sigma_{1} & =\rho(0,0,1)\nonumber \\
\sigma_{2} & =\rho\left(0,\frac{1}{4},\frac{3}{4}\right)\nonumber \\
\sigma_{3} & =\rho\left(\frac{1}{2},\frac{1}{2},0\right)\\
\sigma_{4} & =\rho\left(\frac{1}{4},0,\frac{3}{4}\right)\,.\nonumber 
\end{align}
It is straightforward to check that $0\leq\trace(\mathcal{W}\sigma_{j})\leq1$
for all $j=1,...,4\,.$ Since $\trace(\mathcal{W\,}GHZ)=-2<0\,,$
$\mathcal{W}$ is a witness for the GHZ-state that meets the required
condition and so $R_{\mathcal{FS}}(GHZ)\geq2\,.$
\end{proof}
\begin{lem}
\label{lem:robW} $R_{\mathcal{FS}}(W)=2\,.$ 
\end{lem}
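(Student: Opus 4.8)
The plan is to sandwich $R_{\mathcal{FS}}(W)$ between matching bounds, in the same spirit as the proof of Lemma~\ref{lem:robGHZ}: a fully separable decomposition will give $R_{\mathcal{FS}}(W)\le 2$, and an entanglement witness lying in the set $\mathcal{M}$ of \eqref{eq:witnessrobustness} will give $R_{\mathcal{FS}}(W)\ge 2$. The genuinely new feature is that the symmetry adapted to $W$ is weaker than the GHZ symmetry used before, so the separability bookkeeping becomes substantially heavier.

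The symmetrization I would use is the twirl $T_W$ that averages over the qubit permutations $S_3$ together with the phase rotations $U_\phi=(\mathrm{diag}(1,e^{i\phi}))^{\otimes 3}$, $\phi\in[0,2\pi)$. As with $T_{GHZ}$, this map is FSP (each group element is a local unitary or a relabelling of parties, both of which preserve full separability, and we average) and self-dual, and it fixes $W$ (the phases act on $\ket{W}$ as an overall phase and $\ket{W}$ is permutation invariant). Hence any candidate witness may be replaced by $T_W(\mathcal{W})$, and it suffices to control $\tr(\mathcal{W}\sigma)$ on the $T_W$-invariant separable states. These are block diagonal in Hamming weight and permutation symmetric, i.e.\ of the form $p_0\ketbra{000}+p_W\ketbra{W}+p_E P_E+p_{\bar W}\ketbra{\bar{W}}+p_{\bar E}P_{\bar E}+p_3\ketbra{111}$, where $\ket{\bar{W}}=(\ket{011}+\ket{101}+\ket{110})/\sqrt3$, $\Pi_1,\Pi_2$ are the projectors onto the weight-$1$ and weight-$2$ subspaces, and $P_E=\Pi_1-\ketbra{W}$, $P_{\bar E}=\Pi_2-\ketbra{\bar{W}}$ project onto their two-dimensional non-symmetric parts.

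For the upper bound I would record the separable states $\rho_{t_r}=\frac1{2\pi}\int_0^{2\pi}\ketbra{t_r(\phi)}\,d\phi$ obtained by phase-averaging the product vectors $\ket{t_r(\phi)}=(\sqrt{1-r}\,\ket{0}+e^{i\phi}\sqrt r\,\ket{1})^{\otimes3}$; since the weight-$1$ part of a symmetric product vector lies entirely along $\ket{W}$, each $\rho_{t_r}$ carries the genuine coherent term $3r(1-r)^2\ketbra{W}$, maximal ($4/9$) at $r=1/3$. Mixing such states, together with permuted product vectors that populate the $P_E,P_{\bar E}$ sectors, I would produce fully separable $\sigma,\tau$ with $\tfrac13(W+2\sigma)=\tau$, certifying $R_{\mathcal{FS}}(W)\le2$ via \eqref{eq:robdefn}--\eqref{eq:relrobdefn}. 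For the lower bound I would write the witness in the same $T_W$-invariant form, fix its $\ketbra{W}$ coefficient to $-2$, and verify $0\le\tr(\mathcal{W}\sigma)\le1$ for every separable $\sigma$, so that $-\tr(\mathcal{W}\,W)=2$ is a valid lower bound.

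The main difficulty is precisely this two-sided separability analysis, and it is harder than for the GHZ state. Because $W$ has no bit-flip symmetry, $T_W$ does not identify the weight-$1$ and weight-$2$ sectors the way $T_{GHZ}$ merges $\ket{000}$ and $\ket{111}$; the invariant family is therefore six-dimensional and, crucially, its separable region is not a polytope. Already on the $P_E,P_{\bar E}$-free slice the extreme separable states trace the curved Bernstein moment curve $r\mapsto\big((1-r)^3,3r(1-r)^2,3r^2(1-r),r^3\big)$, whose convex hull has no finite vertex set, so the clean vertex check of Lemma~\ref{lem:robGHZ} is unavailable (tellingly, on this slice alone the $s=2$ decomposition already forces a non-separable noise state, which is why the non-symmetric sectors must be used). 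One also cannot use a bare witness $\alpha\mathbbm1-\beta\,W$: combining $\tr(\mathcal{W}\,W)=\alpha-\beta=-2$ with $\tr(\mathcal{W}\ketbra{000})=\alpha\le1$ and with $\tr(\mathcal{W}\sigma)\ge0$ at the optimal product state, where $|\langle W|\sigma\rangle|^2=4/9=1-G_{\mathcal{FS}}(W)$, forces $\alpha\ge 8/5>1$, a contradiction. The witness must therefore genuinely involve the $P_E,\ketbra{\bar{W}},P_{\bar E}$ and computational-basis terms, and I would pin down its coefficients (and the matching separable decomposition) by minimizing $\tr(\mathcal{W}\ketbra{abc})$ over all product vectors $\ket{abc}$---a low-dimensional non-convex optimization which, after the permutation and phase reduction, collapses to a one-parameter problem along the moment curve---and then check that the minimum is exactly $0$ and the maximum at most $1$, closing the bound at $R_{\mathcal{FS}}(W)=2$.
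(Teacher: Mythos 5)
Your plan has the same overall architecture as the paper's proof (an explicit fully separable decomposition for the upper bound, a witness in $\mathcal{M}$ for the lower bound, symmetrization by permutations plus diagonal phases), and your phase-averaged states $\rho_r=(1-r)^3\ketbra{000}+3r(1-r)^2\,W+3r^2(1-r)\,\overline{W}+r^3\ketbra{111}$ are exactly the states \eqref{eq:twirled} that the paper imports from \cite{hayashi_entanglement_2008}. But both of your load-bearing claims are wrong, and since neither the decomposition nor the witness is ever exhibited, neither bound is actually proved. For the upper bound, your assertion that ``on this slice alone the $s=2$ decomposition already forces a non-separable noise state'' is false: the mixtures $\eta=\tfrac{5}{9}\rho_{0}+\tfrac{4}{9}\rho_{3/4}$ and $\tau=\tfrac{1}{9}\rho_{1}+\tfrac{8}{9}\rho_{1/4}$ are convex combinations of your own moment-curve states (hence fully separable by your own argument) and satisfy $\tfrac{1}{3}(W+2\eta)=\tau$; these are precisely the paper's states \eqref{eq:rhoconvcomb} and \eqref{eq:sigmaconvcomb} (modulo a swap of the labels $\eta,\tau$ in \eqref{eq:robW}). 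So the $P_E,P_{\bar E}$ sectors are not needed at all, and the fact that your guiding computation rules out the decomposition that actually works indicates that this part of your proposal rests on a miscalculation rather than on a construction.

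The lower bound has the more serious gap. $T_W$-invariance of the witness lets you replace $\ketbra{abc}$ by $T_W(\ketbra{abc})$, but the twirl of a non-symmetric product vector does not lie on the moment curve or in its convex hull: after phase averaging, the Hamming-weight-one block of $\ketbra{abc}$ is $vv^{\dagger}$ with $v=(a_0b_0c_1,a_0b_1c_0,a_1b_0c_0)$, which has a nonzero $P_E$ component unless $v\propto(1,1,1)$, and likewise in weight two. Hence the optimization over product vectors does not ``collapse to a one-parameter problem along the moment curve''; it remains a multi-parameter problem, and permutation invariance of an operator by itself does not guarantee that its extremal product vectors are symmetric. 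This is exactly the missing ingredient the paper supplies: its witness $A=\ketbra{000}+\ketbra{001}+\ketbra{010}+\ketbra{100}-3\,W+3\,\overline{W}$ (which is of your invariant form, with $W$-coefficient $-2$) equals $\mathbbm1/2$ plus a permutation-invariant operator containing only full-correlation Pauli terms, so Corollary 5(ii) of \cite{hubener_geometric_2009} licenses the restriction to symmetric products $\ket{aaa}$, after which the one-variable bound $\tfrac{1}{2}\left|\cos6\alpha\right|\leq\tfrac{1}{2}$ finishes the verification $0\leq\tr(A\sigma)\leq1$ on $\mathcal{FS}$. Without this step, or a substitute for it, the verification you describe cannot be carried out as stated---and it is precisely the hard part of the lemma.
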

\begin{proof}
The strategy is similar to the proof of Lemma \ref{lem:robGHZ}. First,
we prove $R_{\mathcal{FS}}(W)\leq2\,.$ We will show that 
\begin{equation}
\eta=\frac{1}{3}(W+2\tau)\,,\label{eq:robW}
\end{equation}
where 
\begin{equation}
\eta=\frac{9}{16}\ketbra{000}+\frac{3}{16}\ketbra{111}+\frac{1}{16}W+\frac{3}{16}\overline{W}\label{eq:rhoconvcomb}
\end{equation}
and 
\begin{equation}
\tau=\frac{3}{8}\ketbra{000}+\frac{1}{8}\ketbra{111}+\frac{3}{8}W+\frac{1}{8}\overline{W}\label{eq:sigmaconvcomb}
\end{equation}
are both fully separable. Here and in what follows, $\overline{W}$
denotes the qubit-flipped version of the $W$-state, 
\begin{equation}
\ket{\overline{W}}=\frac{1}{\sqrt{3}}\left(\ket{110}+\ket{101}+\ket{011}\right)\,.
\end{equation}
As shown in Theorem 6.2 of \cite{eckert_quantum_2002}, if a symmetric
3-qubit state remains positive after partial transposition (PPT),
then it is FS. Since both $\eta$ and $\tau$ are symmetric 3-qubit
states, it is enough to check that they are PPT, which is readily
done, to conclude that they are fully separable.

Another way to see this is by writing $\eta$ and $\tau$ as a convex
combination of fully separable states using a result from \cite{hayashi_entanglement_2008}.
Observe that
\begin{equation}
\begin{aligned}\eta= & \frac{5}{9}\ketbra{000}\\
+ & \frac{4}{9}\left(\frac{1}{2^{6}}\ketbra{000}+\frac{27}{2^{6}}\ketbra{111}\right.\\
 & \left.+\frac{9}{2^{6}}W+\frac{27}{2^{6}}\overline{W}\right)
\end{aligned}
\label{eq:rhoconvcomb-1}
\end{equation}
and 
\begin{equation}
\begin{aligned}\tau= & \frac{1}{9}\ketbra{111}\\
+ & \frac{8}{9}\left(\frac{27}{2^{6}}\ketbra{000}+\frac{1}{2^{6}}\ketbra{111}\right.\\
 & \left.+\frac{27}{2^{6}}W+\frac{9}{2^{6}}\overline{W}\right)
\end{aligned}
\label{eq:sigmaconvcomb-1}
\end{equation}
where, in each case, the first term is clearly fully separable. As
we shall see, the second term is of the form

\begin{equation}
\begin{aligned}\tr & (\phi^{\otimes3}\ketbra{000})\ketbra{000}\\
+ & \tr(\phi^{\otimes3}\ketbra{111})\ketbra{111}\\
+ & \tr(\phi^{\otimes3}\,W\,)\,W\\
+ & \tr(\phi^{\otimes3}\,\overline{W}\,)\,\overline{W}
\end{aligned}
\label{eq:twirled}
\end{equation}
for some qubit state $\phi$. Ref. \cite{hayashi_entanglement_2008}
shows that all states of this form are fully separable. Writing
\begin{equation}
\ket{\phi}=\cos\alpha\ket{0}+\ue^{\ui\beta}\sin\alpha\ket{1}\,.
\end{equation}
and inserting it into equation \eqref{eq:twirled}, the parameter
$\beta$ cancels in all terms and the state in equation \eqref{eq:twirled}
can be written in terms of $\alpha$ alone with $\alpha=\pi/3$ for
$\eta$ and $\alpha=\pi/6$ for $\tau\,.$

Next, we prove $R_{\mathcal{FS}}(W)\geq2\,.$ We will show that 
\begin{equation}
\begin{aligned}A= & \ketbra{000}-3\,W+\\
 & \ketbra{001}+\ketbra{010}+\ketbra{100}+3\,\overline{W}
\end{aligned}
\end{equation}
is a witness for the state $\ketbra{W}$ such that 
\begin{equation}
\tr(AW)=-2
\end{equation}
and 
\begin{equation}
0\leq\tr(A\sigma)\leq1\label{eq:trace1separable}
\end{equation}
for all $\sigma\in\mathcal{FS}\,.$ %

Let $\sigma\in\mathcal{FS}\,.$ Without loss of generality, to prove
(\ref{eq:trace1separable}) we can assume $\sigma=\ketbra{\psi}$
is pure. So we want to show
\begin{equation}
0\leq\tr(A\ketbra{\psi})\leq1\,.\label{eq:trApsi01}
\end{equation}
Notice that $A$ is permutationally invariant, and that we can express
$A$ in the basis of Pauli matrices as
\begin{equation}
A=\sum_{ijk\in{x,y,z}}\lambda_{ijk}\sigma_{i}\otimes\sigma_{j}\otimes\sigma_{k}+\frac{\mathbbm1_{8}}{2}
\end{equation}
for some $\lambda_{ijk}\in\mathbb{R}$ and where $\mathbbm1_{d}$
is the $d$-dimensional identity, so that
\begin{equation}
A'=A-\frac{\mathbbm1_{8}}{2}
\end{equation}
has no identity component in the basis of Pauli matrices. That is,
$A'$ contains only full correlation terms, and it is still permutationally
invariant so it satisfies the conditions of Corollary 5 (ii) in \cite{hubener_geometric_2009}.
In particular, $A'$ can be viewed as a symmetric three-linear form
acting on $\mathbb{R}^{3}$. This means that
\begin{equation}
\max_{\ket{\psi}\in\mathcal{FS}}\left|\tr(A'\ketbra{\psi})\right|
\end{equation}
can be attained by a symmetric state $\ket{\psi}=\ket{a}\ket{a}\ket{a}\equiv\ket{aaa}\,.$
The qubit $\ket{a}$ can be expressed in terms of two real parameters
as
\begin{equation}
\ket{a}=\cos\alpha\ket{0}+\ue^{\ui\beta}\sin\alpha\ket{1}
\end{equation}
and so
\begin{equation}
\left|\tr(A'\ketbra{aaa})\right|=\frac{1}{2}\left|\cos6\alpha\right|\leq\frac{1}{2}\,.
\end{equation}
But this completes the proof, since, by linearity, to show
\begin{equation}
-\frac{1}{2}\leq\tr(A'\ketbra{\psi})\leq\frac{1}{2}
\end{equation}
(which is equivalent to \eqref{eq:trApsi01}) it suffices to show
\begin{equation}
\max_{\ket{\psi}\in\mathcal{FS}}\left|\tr(A'\ketbra{\psi})\right|\leq\frac{1}{2}\,.
\end{equation}
This can be seen by viewing $\tr(A'\ketbra{\psi})$ as a symmetric
three-linear form in $\mathbb{R}^{3}\,.$ If the maximum absolute
value is attained by some state $\ket{a^{*}}\,,$ then the state $\ket{\tilde{a}^{*}}$
which flips the sign of the vector which the three-linear form acts
on will give a minimum of the expression equal to minus the maximum.%
{} Hence,
\begin{equation}
\begin{aligned}\max_{\ket{\psi}\in\mathcal{FS}}\left|\tr(A'\ketbra{\psi})\right| & =\max_{\ket{\psi}\in\mathcal{FS}}\tr(A'\ketbra{\psi})\\
 & =-\min_{\ket{\psi}\in\mathcal{FS}}\tr(A'\ketbra{\psi})\,.
\end{aligned}
\end{equation}
Therefore \eqref{eq:trApsi01} holds true and hence the witness $A$
gives the stated lower bound for the FS robustness of $W$.
\end{proof}
We note that the values obtained for the robustness $R_{\mathcal{FS}}$
of the W and GHZ states show that, unlike in the bipartite case, the
robustness can be strictly larger than the generalized robustness.
The generalized robustness, $R_{G}(\cdot)\,,$ is defined as
\begin{equation}
R_{G}(\cdot)=\min_{\tau\in H}R(\cdot||\tau)
\end{equation}
where, this time, $\tau$ may be separable or entangled. Hence $R_{G}(\cdot)\leq R(\cdot)$
but, in addition, it was shown in \cite{harrow_robustness_2003} that
$R_{G}(\cdot)=R(\cdot)$ for bipartite pure states. However, the generalized
robustness of the W state has been computed in \cite{hayashi_entanglement_2008}
to be $5/4,$ and that of the GHZ state was shown to be $1$ in \cite{hayashi_bounds_2006},
so they are both strictly less than the robustness of these states.
To the best of our knowledge, this is the first time that states such
that $R_{G}(\cdot)<R(\cdot)$ have been found.

We are now ready to prove Theorem \ref{thm:noFSPmes}.
\begin{proof}
As we outlined in the main text, the only candidate for a maximally
entangled state of three qubits is the W state, as it is the unique
state on $H=\left(\mathbb{C}^{2}\right)^{\otimes3}$ that achieves
the maximum value of the FSP-measure $G_{\mathcal{FS}}$ (among both
pure and mixed states, since the convex-roof extension of $G_{\mathcal{FS}}$
to mixed states ensures that the maximum value will always be achieved
by a pure state). So, if there existed a maximally entangled state,
it would need to be possible that the W state be transformed into
any other state via an FSP map. We will assume that there exists an
FSP map $\Lambda$ such that $\Lambda(W)=GHZ\,,$ and will arrive
at a contradiction by showing that there exists a state $\eta\in\mathcal{FS}$
such that $\Lambda(\eta)\not\in\mathcal{FS}\,.$

Let $\Lambda$ be an FSP map such that $\Lambda(W)=GHZ$ and let
\begin{equation}
\eta=\frac{1}{3}W+\frac{2}{3}\tau\in\mathcal{FS}\,,\label{eq:R(W)upper}
\end{equation}
where $\tau,\eta\in\mathcal{FS}\,,$ be the convex combination that
gives the upper bound to $R_{\mathcal{FS}}(W)$ in equations (\eqref{eq:robW}-\eqref{eq:sigmaconvcomb}).
Let $T_{GHZ}$ be the twirling map onto the $GHZ$-symmetric subspace
(defined in \cite{hayashi_entanglement_2008}; see also the proof
of Lemma \ref{lem:robGHZ}). Then, 
\begin{equation}
\begin{aligned}\eta'= & T_{GHZ}\left(\Lambda\left(\frac{1}{3}W+\frac{2}{3}\tau\right)\right)\\
= & \frac{1}{3}GHZ+\frac{2}{3}T_{GHZ}(\Lambda(\tau))\,.
\end{aligned}
\end{equation}
Since both $T_{GHZ}$ and $\Lambda$ are full separability-preserving,
it is the case that $\eta',\Lambda(\tau)\,,$ $T_{GHZ}(\Lambda(\tau))\in\mathcal{FS}\,.$
Now, recall that $\tau$ has a non-zero $W$ component: 
\[
\tau=p\ketbra{W}+(1-p)\xi
\]
for some $p\in(0,1)$ and some state $\xi$, so that 
\begin{equation}
\eta'=\frac{1}{3}GHZ+\frac{2}{3}\left[p\,GHZ+(1-p)T_{GHZ}(\Lambda(\xi))\right]\,.\label{eq:tghzlambda}
\end{equation}
But, as we shall now show, the FS $GHZ$-symmetric state $\upsilon$
such that 
\begin{equation}
\frac{1}{3}GHZ+\frac{2}{3}\upsilon\in\mathcal{FS}\label{eq:GHZ_fs}
\end{equation}
is unique, i.e. if equation (\ref{eq:GHZ_fs}) holds then necessarily
$\upsilon=\rho(0,1/4,3/4)$ as in equation (\ref{eq:robGHZ}). However,
the state appearing in equation (\ref{eq:tghzlambda}) is not $\upsilon$
(since $\tr(\upsilon\,GHZ)=0$) hence, contrary to our assumption,
$\eta'$ cannot be FS.

Recall, from the proof of Lemma \ref{lem:robGHZ} (equation (\ref{eq:DCTform})),
that all $GHZ$-symmetric states are of the form

\begin{equation}
\begin{aligned}\begin{aligned}\rho(\lambda^{+},\lambda^{-},\lambda)=\\
\lambda^{+}GHZ\,+\, & \lambda^{-}GHZ_{-}\,+\frac{\lambda}{6}\sum_{i=001}^{110}\ketbra{i}
\end{aligned}
\end{aligned}
\label{eq:DCTform-1}
\end{equation}
so that equation (\ref{eq:GHZ_fs}) can be expressed in terms of the
$\lambda$ parameters as 
\begin{equation}
\frac{1}{3}GHZ+\frac{2}{3}\rho(\lambda^{+},\lambda^{-},\lambda)=\rho(\frac{1}{3}+\frac{2}{3}\lambda^{+},\frac{2}{3}\lambda^{-},\frac{2}{3}\lambda)\,.\label{eq:GHZ_fs_param}
\end{equation}
States of the form (\ref{eq:DCTform-1}) are fully separable iff 
\begin{equation}
|\lambda^{+}-\lambda^{-}|\leq\lambda/3\,.\label{eq:fs_condition}
\end{equation}
Since this condition must hold for both states $\rho(\cdot,\cdot,\cdot)$
in equation (\ref{eq:GHZ_fs_param}), we must also have 
\begin{equation}
\left|\frac{1}{3}+\frac{2}{3}\lambda^{+}-\frac{2}{3}\lambda^{-}\right|\leq\frac{2}{9}\lambda
\end{equation}
and, for normalisation, we need 
\begin{equation}
\lambda^{+}+\lambda^{-}+\lambda=1\,.
\end{equation}
It is straightforward to check that these three conditions hold only
if 
\begin{equation}
\lambda^{+}=0;\:\lambda^{-}=1/4;\:\lambda=3/4\,,
\end{equation}
which corresponds to the state $\upsilon$ as claimed above.

Therefore $\eta$ in equation (\ref{eq:R(W)upper}) is fully separable,
yet $\eta'=T_{GHZ}(\Lambda(\eta))$ is not fully separable. So $\Lambda$
is not FSP and hence the theorem is proven.
\end{proof}
Thus, Theorem \ref{thm:noFSPmes} forbids the existence of a multipartite
maximally entangled state under FSP in the simplest case of $H=(\mathbb{C}^{2})^{\otimes3}$.
Still, the LOCC case provides a useful comparison since, in this formalism,
the case $n=3,$ $d=2$ is the only one in which no state is isolated
(in addition to $n=2$). Whenever no single maximally entangled state
exists one needs to consider a maximally entangled set (MES) \cite{de_vicente_maximally_2013},
defined as the minimal set of states on $H$ such that any state on
$H$ can be obtained by means of the free operations from a state
in this set. The MES under LOCC for $n=3$ and $d=2$ has been characterized
in \cite{de_vicente_maximally_2013}, and it is found to be relatively
small in the sense that it has measure zero on $H$ (in contrast,
for other values of $n$ and $d$ the fact that isolation is generic
imposes that the MES has full measure on $H$). However, interestingly,
the MES under FSP is smaller even in this case, given that it is strictly
included in the MES under LOCC. This is because, as we will now show,
the W and GHZ states can be transformed by FSP operations into inequivalent
states that are in the MES under LOCC. It is worth mentioning that
the target states may be chosen to lie in different SLOCC classes
with respect to the initial states, and so this gives an explicit
example of deterministic FSP conversions among states in different
SLOCC classes.

Let $\psi_{GHZ}^{+}$ denote states of the form 
\begin{equation}
\ket{\psi_{GHZ}^{+}}=\sqrt{K}\left(\ket{000}+\ket{\phi_{A}\phi_{B}\phi_{C}}\right)\label{eq:ghzplus}
\end{equation}
where 
\begin{equation}
\begin{aligned}\ket{\phi_{A}}= & \cos\alpha\ket{0}+\sin\alpha\ket{1}\,,\\
\ket{\phi_{B}}= & \cos\beta\ket{0}+\sin\beta\ket{1}\,,\\
\ket{\phi_{C}}= & \cos\gamma\ket{0}+\sin\gamma\ket{1}\,,
\end{aligned}
\end{equation}
$\alpha,\beta,\gamma\in(0,\pi/2]$ and $K=(2(1+\cos\alpha\cos\beta\cos\gamma))^{-1}$
is a normalisation factor. States of the form $\psi_{GHZ}^{+}$ are
in the MES under LOCC, since they cannot be reached by any LOCC map
regardless of the input state on $H=(\mathbb{C}^{2})^{\otimes3}$
\cite{dur_three_2000,turgut_deterministic_2010,de_vicente_maximally_2013}.
So the following proposition does not hold in the LOCC regime. 
\begin{prop}
There exists an FSP map $\Lambda$ such that $\Lambda(W)=\psi_{GHZ}^{+}$
for some state of the form $\psi_{GHZ}^{+}\,.$ 
\end{prop}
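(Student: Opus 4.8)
The plan is to reuse the deterministic-transformation machinery built in the proofs of Theorems \ref{thm:no-slocc} and \ref{thm:nontrivialhierarchy}, rather than construct a map from scratch. Concretely, I would instantiate the map \eqref{eq:lambdaprob} with the input $\psi_{1}=W$ and the target $\psi_{2}=\psi_{GHZ}^{+}$, take $p=1$, and let $\rho_{\mathcal{FS}}$ be the fully separable state achieving $R_{\mathcal{FS}}(\psi_{GHZ}^{+})$. Since $W$ is a rank-one projector we have $0\leq W\leq\mathbbm1$, so by \eqref{eq:lambdacptp} this $\Lambda$ is automatically CPTP, and by construction $\Lambda(W)=\psi_{GHZ}^{+}$. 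Everything then reduces to showing $\Lambda$ is full-separability-preserving, which — exactly as in Theorem \ref{thm:nontrivialhierarchy} — follows from the deterministic condition \eqref{eq:nontrivialhierarchy} read with $\psi_{1}=W$ and $\psi_{2}=\psi_{GHZ}^{+}$:
\begin{equation}
\frac{1}{R_{\mathcal{FS}}(\psi_{GHZ}^{+})}\frac{G_{\mathcal{FS}}(W)}{1-G_{\mathcal{FS}}(W)}>1\,.
\end{equation}
So the two quantities to control are $G_{\mathcal{FS}}(W)$ and $R_{\mathcal{FS}}(\psi_{GHZ}^{+})$.

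For the geometric measure I would compute $G_{\mathcal{FS}}(W)$ directly. Using the permutation symmetry of $W$ one may restrict the maximization in the definition of $G_{\mathcal{FS}}$ to symmetric product states $\ket{\phi}^{\otimes3}$ with $\ket{\phi}=\cos\theta\ket{0}+\sin\theta\ket{1}$, reducing the problem to the one-variable optimization of $\sqrt{3}\,\sin\theta\cos^{2}\theta$. The maximum is attained at $\sin^{2}\theta=1/3$ and gives a largest overlap squared of $4/9$, whence $G_{\mathcal{FS}}(W)=5/9$ and the fixed factor $G_{\mathcal{FS}}(W)/(1-G_{\mathcal{FS}}(W))=5/4$. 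Strictly, the proof does not even need this exact value: it is enough that $G_{\mathcal{FS}}(W)\in(0,1)$, which holds by faithfulness of $G_{\mathcal{FS}}$ (as $W\notin\mathcal{FS}$) together with the fact that the fully separable states span $H$, so that the bracketed ratio is some fixed positive constant.

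For the robustness factor I would exploit the freedom left in choosing the target. As $\alpha,\beta,\gamma\to0$ the family $\psi_{GHZ}^{+}$ converges to $\ket{000}\in\mathcal{FS}$; since $R_{\mathcal{FS}}$ is continuous \cite{vidal_robustness_1999} and faithful, $R_{\mathcal{FS}}(\psi_{GHZ}^{+})\to0$ along this path. Hence one can fix $\alpha,\beta,\gamma$ small but nonzero so that $R_{\mathcal{FS}}(\psi_{GHZ}^{+})<5/4$ (or below whatever value $G_{\mathcal{FS}}(W)/(1-G_{\mathcal{FS}}(W))$ takes), while the state stays genuinely entangled and of the prescribed $\psi_{GHZ}^{+}$ form. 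This verifies the displayed inequality, so $\Lambda$ is FSP and the proposition follows; inequivalence to $W$ is then immediate since for such angles $\psi_{GHZ}^{+}$ lies in the GHZ SLOCC class rather than the W class.

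The main obstacle I expect is the control of $R_{\mathcal{FS}}(\psi_{GHZ}^{+})$, since computing the full-separability robustness of a genuine GHZ-type state exactly would be laborious and is exactly where a naive approach would get stuck. The clean way around this is precisely the continuity-and-faithfulness limit above, which replaces any explicit robustness computation by the observation that the target can be taken arbitrarily close to the separable state $\ket{000}$; the only remaining point needing a line of care is confirming that the chosen $\psi_{GHZ}^{+}$ is not itself fully separable, which holds for all nonzero angles.
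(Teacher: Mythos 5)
Your proof is correct and takes essentially the same route as the paper: the same one-shot map \eqref{eq:lambdaprob} with $p=1$ and $\rho_{\mathcal{FS}}$ the optimal robustness state, the same FSP criterion $R_{\mathcal{FS}}(\psi_{GHZ}^{+})\leq G_{\mathcal{FS}}(W)/(1-G_{\mathcal{FS}}(W))=5/4$, and the same continuity argument taking $\alpha,\beta,\gamma\to0$ so that $\psi_{GHZ}^{+}$ approaches $\ket{000}$ and its robustness falls below the threshold while remaining entangled. The paper additionally supplies, explicitly ``for the sake of completeness,'' a quantitative bound $R_{\mathcal{FS}}(\psi_{GHZ}^{+})\leq(4-\cos\alpha\cos\beta\cos\gamma)/\left(2(1+\cos\alpha\cos\beta\cos\gamma)\right)$ by applying invertible local operations to the GHZ robustness decomposition, but the core argument matches yours.
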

\begin{proof}
Let 
\begin{equation}
\Lambda(\eta)=\tr(W\eta)\psi_{GHZ}^{+}+\tr\left[(\mathbbm1-W)\eta\right]\tau_{\mathcal{FS}},
\end{equation}
where $\tau_{\mathcal{FS}}\in\mathcal{FS}$ is the state that gives
the robustness of the state $\psi_{GHZ}^{+}\,.$ Clearly, $\Lambda(W)=\psi_{GHZ}^{+}$
and it remains to be shown that $\Lambda$ is FSP. As argued in Theorems
\ref{thm:no-slocc} and \ref{thm:nontrivialhierarchy}, this happens
when
\begin{equation}
R_{\mathcal{FS}}(\psi_{GHZ}^{+})\leq\frac{G_{\mathcal{FS}}(W)}{1-G_{\mathcal{FS}}(W)}=\frac{5}{4}\,.
\end{equation}
But, by continuity of the robustness, such a state $\psi_{GHZ}^{+}$
can always be found by picking the parameters $\alpha,\beta,\gamma$
sufficiently close to zero since in this case the states $\psi_{GHZ}^{+}$
approach the set of FS states.

Anyway, for the sake of completeness, we provide an explicit quantitative
upper bound in what follows. Consider the invertible local operations
\begin{align}
A & =\left(\begin{array}{cc}
1 & \cos\alpha\\
0 & \sin\alpha
\end{array}\right)\,,\nonumber \\
B & =\left(\begin{array}{cc}
1 & \cos\beta\\
0 & \sin\beta
\end{array}\right)\,,\\
C & =\left(\begin{array}{cc}
1 & \cos\gamma\\
0 & \sin\gamma
\end{array}\right)\,.\nonumber 
\end{align}
Applying these to the FS states in equation \eqref{eq:robGHZ} used
to bound the robustness of the GHZ state, 
\begin{equation}
A\otimes B\otimes C\left(\frac{1}{3}GHZ+\frac{2}{3}\upsilon\right)A^{\dagger}\otimes B^{\dagger}\otimes C^{\dagger}\,,
\end{equation}
gives a state proportional to
\begin{equation}
\frac{1}{3}\left(1+\cos\alpha\cos\beta\cos\gamma\right)\psi_{GHZ}^{+}+\frac{2}{3}\frac{4-\cos\alpha\cos\beta\cos\gamma}{4}\upsilon'\,,\label{eq:robpsighzplus}
\end{equation}
where $\upsilon'=A\otimes B\otimes C\upsilon A^{\dagger}\otimes B^{\dagger}\otimes C^{\dagger}$
is still fully separable since local operations cannot create entanglement.
For the same reason, the state in equation (\ref{eq:robpsighzplus})
is fully separable, and hence the robustness of the state $\psi_{GHZ}^{+}$
cannot exceed \footnote{This shows, in particular, that the robustness of all $\psi_{GHZ}^{+}$
states is always less than or equal to 2.} 
\begin{equation}
R_{\mathcal{FS}}(\psi_{GHZ}^{+})\leq\frac{4-\cos\alpha\cos\beta\cos\gamma}{2(1+\cos\alpha\cos\beta\cos\gamma)}\,.
\end{equation}
Clearly, there exist $\alpha,\beta,\gamma\in(0,\pi/2]$ such that
this bound is lower than or equal to 5/4, as required. For an example,
take $\alpha=\beta=\pi/2$ and $\gamma$ such that $\cos\gamma\geq6/7\,.$
\end{proof}
We will now show the converse result: there are FSP maps which take
the $GHZ$-state to states in the $W$-class which are in the MES
under LOCC. Such states are of the form 
\begin{equation}
\begin{aligned}\ket{\psi_{W}}= & \sqrt{x_{1}}\ket{001}+\sqrt{x_{2}}\ket{010}+\sqrt{x_{3}}\ket{100}\end{aligned}
\label{eq:wform}
\end{equation}
where $x_{1}+x_{2}+x_{3}=1$. They are in the MES under LOCC, as no
LOCC map can reach these states for any input state on $H=(\mathbb{C}^{2})^{\otimes3}$
\cite{dur_three_2000,kintas_transformations_2010,de_vicente_maximally_2013},
but (as we will now prove) not under FSP. 
\begin{prop}
There exists an FSP map $\Lambda$ such that $\Lambda(GHZ)=\psi_{W}$
for some state of the form $\psi_{W}\,.$ 
\end{prop}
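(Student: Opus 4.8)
The plan is to construct the FSP map explicitly in the same style as the maps used throughout this section, namely a map of the form in equation (\ref{eq:lambdaprob}) that sends $GHZ$ to the target $\psi_W$ and sends the orthogonal complement into a fixed fully separable state realizing the robustness of $\psi_W$. Concretely, I would set
\begin{equation}
\Lambda(\eta)=\tr(GHZ\,\eta)\,\psi_W+\tr\!\left[(\mathbbm1-GHZ)\eta\right]\tau_{\mathcal{FS}}\,,
\end{equation}
where $\tau_{\mathcal{FS}}\in\mathcal{FS}$ is a fully separable state giving the robustness of $\psi_W$. This is manifestly CPTP since $0\leq GHZ\leq\mathbbm1$ (cf. equation (\ref{eq:lambdacptp})), and clearly $\Lambda(GHZ)=\psi_W$.

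\emph{Next}, by the argument already spelled out in Theorems \ref{thm:no-slocc} and \ref{thm:nontrivialhierarchy}, this map is FSP provided the robustness of the target is dominated by the geometric-measure ratio of the source, i.e.
\begin{equation}
R_{\mathcal{FS}}(\psi_W)\leq\frac{G_{\mathcal{FS}}(GHZ)}{1-G_{\mathcal{FS}}(GHZ)}\,.
\end{equation}
Since the GHZ state has geometric measure $G_{\mathcal{FS}}(GHZ)=1/2$ (its maximal overlap with a product state being $1/\sqrt{2}$), the right-hand side equals $1$. So it suffices to exhibit a state of the form $\psi_W$ with $R_{\mathcal{FS}}(\psi_W)\leq 1$.

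\emph{Finally}, I would establish the needed robustness bound by a continuity/limit argument analogous to the previous proposition: as the Schmidt-type weights $x_1,x_2,x_3$ become unbalanced—say $x_1\to 1$ with $x_2,x_3\to 0$—the states $\psi_W$ approach the set of fully separable states, so by continuity of the robustness \cite{vidal_robustness_1999} one has $R_{\mathcal{FS}}(\psi_W)\to 0$, and in particular the bound $R_{\mathcal{FS}}(\psi_W)\leq 1$ holds for a suitable choice of weights. For concreteness I would also provide an explicit upper bound by applying invertible local filters of the form used above to the separable decomposition realizing $R_{\mathcal{FS}}(GHZ)$ in equation (\ref{eq:robGHZ}), since applying local operators that deform $GHZ$ towards $\psi_W$ yields a fully separable state (local operations cannot create entanglement) from which an explicit convex combination, and hence an explicit robustness bound, can be read off.

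\emph{The main obstacle} I anticipate is controlling $R_{\mathcal{FS}}(\psi_W)$ sharply enough: the qualitative continuity argument guarantees existence but not a clean closed form, and producing an explicit separable decomposition via local filters requires checking that the deformed state genuinely remains fully separable and that the resulting weight on $\psi_W$ leaves a robustness below $1$. This is the same technical hurdle encountered in the preceding proposition, and I expect it to be resolved by the identical mechanism—local operations preserve full separability—so that only a routine computation of the filter action on the decomposition in equation (\ref{eq:robGHZ}) remains.
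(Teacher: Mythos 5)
Your core argument is exactly the paper's proof: the same map $\Lambda(\eta)=\tr(GHZ\,\eta)\,\psi_W+\tr[(\mathbbm1-GHZ)\eta]\,\tau_{\mathcal{FS}}$, the same sufficiency condition $R_{\mathcal{FS}}(\psi_W)\leq G_{\mathcal{FS}}(GHZ)/(1-G_{\mathcal{FS}}(GHZ))=1$, and the same continuity argument (states $\psi_W$ with unbalanced weights approach the FS set, so their robustness can be made arbitrarily small). That part is complete and correct, and the paper does nothing more.

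However, your supplementary ``for concreteness'' construction contains a genuine error: you cannot obtain an explicit bound on $R_{\mathcal{FS}}(\psi_W)$ by applying local filters to the decomposition of $GHZ$ in equation (\ref{eq:robGHZ}), because no local operators $A\otimes B\otimes C$ can map $GHZ$ to any state of the form $\psi_W$ --- the two lie in different SLOCC classes (concretely, $GHZ$ has tensor rank $2$ and local operations cannot increase tensor rank, while $\psi_W$ has tensor rank $3$). This is precisely why the analogous filter trick worked in the preceding proposition ($\psi_{GHZ}^+$ is in the GHZ class) but cannot work here. If you wanted an explicit bound, the correct route would be to apply local filters to the separable decomposition realizing $R_{\mathcal{FS}}(W)$, i.e.\ equations (\ref{eq:robW})--(\ref{eq:sigmaconvcomb}) from Lemma \ref{lem:robW}, since $\psi_W=A\otimes B\otimes C\,\ket{W}$ (up to normalization) for suitable invertible local operators. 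Since your continuity argument already suffices, this flaw does not invalidate the proof, but the proposed explicit computation as stated would fail.
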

\begin{proof}
Since $G_{\mathcal{FS}}(GHZ)=1/2\,,$ it suffices to find a state
$\psi_{W}$ such that $R_{\mathcal{FS}}(\psi_{W})\leq1\,,$ which
can be done since the robustness is continuous and there are states
$\psi_{W}$ arbitrarily close to the set of FS states. Then,
\begin{equation}
\Lambda(\eta)=\tr(GHZ\eta)\psi_{W}+\tr\left[(\mathbbm1-GHZ)\eta\right]\tau_{\mathcal{FS}},
\end{equation}
where $\tau_{\mathcal{FS}}\in\mathcal{FS}$ is the state such that
that $R_{\mathcal{FS}}(\psi_{W})=R(\psi_{W}||\tau_{\mathcal{FS}})\,,$
is the required map.
\end{proof}

\section{BSP regime}
\begin{thm}
In the resource theory of entanglement where the free operations are
BSP maps, there exists a maximally GME state on every $H$. Namely,
$\forall|\psi\rangle\in(\mathbb{C}^{d})^{\otimes n}$, there exists
a CPTP BSP map $\Lambda$ such that $\Lambda(GHZ(n,d))=\psi$. 
\end{thm}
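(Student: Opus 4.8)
The plan is to reuse the construction from the proof of Theorem~\ref{thm:no-slocc}, now in the BSP setting with the $GHZ$ state as the source and with $p=1$. For a fixed target $\psi$ I would take
\begin{equation}
\Lambda(\eta)=\tr(GHZ\,\eta)\,\psi+\tr\left[(\mathbbm1-GHZ)\eta\right]\rho_{\mathcal{BS}}\,,
\end{equation}
where $\rho_{\mathcal{BS}}\in\mathcal{BS}$ attains the robustness of $\psi$, i.e.\ $R_{\mathcal{BS}}(\psi)=R(\psi||\rho_{\mathcal{BS}})$. Since $0\leq GHZ\leq\mathbbm1$, equation~\eqref{eq:lambdacptp} guarantees that $\Lambda$ is CPTP, and clearly $\Lambda(GHZ)=\psi$. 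Exactly as in Theorem~\ref{thm:no-slocc}, for $\sigma\in\mathcal{BS}$ one finds $\Lambda(\sigma)=t\psi+(1-t)\rho_{\mathcal{BS}}$ with $t=\tr(GHZ\,\sigma)$, which lies in $\mathcal{BS}$ precisely when $(1-t)/t\geq R_{\mathcal{BS}}(\psi)$. As $t$ is linear in $\sigma$, its maximum over $\mathcal{BS}$ is attained on a pure bi-separable state and equals $1-G_{\mathcal{BS}}(GHZ)$; hence $\Lambda$ is BSP as soon as
\begin{equation}
R_{\mathcal{BS}}(\psi)\leq\frac{G_{\mathcal{BS}}(GHZ)}{1-G_{\mathcal{BS}}(GHZ)}\,,
\end{equation}
so the whole theorem reduces to verifying this single inequality for every $\psi$.

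The first remaining ingredient is the geometric measure of the $GHZ$ state. For any bipartition $A|B$ the Schmidt decomposition of $GHZ(n,d)$ has all $d$ coefficients equal to $1/\sqrt{d}$, so the largest overlap with a state that is product across that cut is $1/\sqrt{d}$; since pure bi-separable states are exactly those that are product across some bipartition, this gives $G_{\mathcal{BS}}(GHZ(n,d))=1-1/d$, and therefore the right-hand side above equals $d-1$.

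The main step, and the place where I expect the real work to sit, is the uniform upper bound $R_{\mathcal{BS}}(\psi)\leq d-1$, valid for \emph{all} $\psi$. The idea is that bi-separability can be certified by a single cut: fixing the bipartition $1|2\cdots n$, any $\sigma$ and any mixture that are separable across this cut are in particular bi-separable, so $R_{\mathcal{BS}}(\psi)\leq R_{1|2\cdots n}(\psi)$, the ordinary bipartite robustness across that cut. Viewing $\psi$ as a bipartite pure state on $\mathbb{C}^{d}\otimes\mathbb{C}^{d^{n-1}}$ with Schmidt coefficients $\{\lambda_i\}$, the standard formula~\cite{vidal_robustness_1999} gives $R_{1|2\cdots n}(\psi)=\left(\sum_i\lambda_i\right)^2-1$; since party $1$ has dimension $d$ there are at most $d$ nonzero $\lambda_i$, and Cauchy--Schwarz together with $\sum_i\lambda_i^2=1$ yields $\sum_i\lambda_i\leq\sqrt{d}$, hence $R_{1|2\cdots n}(\psi)\leq d-1$. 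Combining this with $G_{\mathcal{BS}}(GHZ)/(1-G_{\mathcal{BS}}(GHZ))=d-1$ closes the required inequality (with equality in the worst case), and the map $\Lambda$ above is the desired CPTP BSP transformation from $GHZ(n,d)$ to $\psi$. The subtle point to get right is that the inequality is saturated, so the construction only works because $p=1$ remains admissible at the boundary; I would double-check that convexity of $\mathcal{BS}$ keeps $(\psi+s\rho_{\mathcal{BS}})/(1+s)$ bi-separable for all $s\geq R_{\mathcal{BS}}(\psi)$, which is what licenses the step from $t\leq 1/d$ to $\Lambda(\sigma)\in\mathcal{BS}$.
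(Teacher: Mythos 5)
Your proposal is correct and follows essentially the same route as the paper's proof: the same two-term map $\Lambda$ built from $\rho_{\mathcal{BS}}$, the same reduction to $R_{\mathcal{BS}}(\psi)\leq G_{\mathcal{BS}}(GHZ)/(1-G_{\mathcal{BS}}(GHZ))=d-1$, the computation $G_{\mathcal{BS}}(GHZ(n,d))=(d-1)/d$ via the Schmidt coefficients, and the Vidal--Tarrach formula applied across a single cut (the paper minimizes over all cuts, but the binding bound is likewise the at-most-$d$ Schmidt terms of the single-party cut). Your closing worry about saturation is also handled correctly: the robustness is defined by an attained minimum and convexity of $\mathcal{BS}$ keeps $(\psi+s\rho_{\mathcal{BS}})/(1+s)$ bi-separable for all $s\geq R_{\mathcal{BS}}(\psi)$, exactly as you say.
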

\begin{proof}
For every given $\ket{\psi}\in\left(\mathbb{C}^{d}\right)^{\otimes n}\,,$
let 
\begin{equation}
\Lambda(\eta)=\tr(\eta\,GHZ(n,d))\:\psi\,+\tr\left[(\mathbbm1-GHZ(n,d))\eta\right]\rho_{\mathcal{BS}}
\end{equation}
where $\rho_{\mathcal{BS}}\in\mathcal{BS}$ is the state which gives
the (biseparable) robustness of $\psi$ (i.e. $R_{\mathcal{BS}}(\psi)=R(\psi||\rho_{\mathcal{BS}})$).
Then, $\Lambda(GHZ(n,d))=\psi$ and it remains to be shown that $\Lambda$
is BSP. As argued in the proofs of Theorems \ref{thm:no-slocc} and
\ref{thm:nontrivialhierarchy}, this happens iff

\begin{equation}
R_{\mathcal{BS}}(\psi)\leq\frac{G_{\mathcal{BS}}(GHZ(n,d))}{1-G_{\mathcal{BS}}(GHZ(n,d))}\,.\label{eq:EgeomRobustnessBSP}
\end{equation}
As discussed in the main text, it follows from equation (6) that $G_{\mathcal{BS}}(GHZ(n,d))=(d-1)/d\:$
and, therefore, $\Lambda$ is BSP iff $R_{\mathcal{BS}}(\psi)\leq d-1\,.$
It is shown in \cite{vidal_robustness_1999} that for every bipartite
pure state $\psi_{A|B}$ with Schmidt decomposition
\begin{equation}
\psi_{A|B}=\sum_{i}\sqrt{\lambda_{i}^{A|B}}\ket{i}_{A}\ket{i}_{B}
\end{equation}
it holds that
\begin{equation}
R_{\mathcal{BS}}(\psi)=\left(\sum_{i}\sqrt{\lambda_{i}^{A|B}}\right)^{2}-1\,.
\end{equation}
Thus
\begin{equation}
\begin{aligned}R_{\mathcal{BS}}(\psi) & \leq\min_{M\subsetneq[n]}\left(\sum_{i}\sqrt{\lambda_{i}^{M|\bar{M}}}\right)^{2}-1\\
 & \leq d-1
\end{aligned}
\end{equation}
where the latter inequality follows from considering the state with
all eigenvalues $\lambda_{i}=1/d\,.$ Hence, $\forall\ket{\psi}\in\left(\mathbb{C}^{d}\right)^{\otimes n}$
there exists a BSP map $\Lambda$ such that $\Lambda(GHZ)=\psi\,.$
\end{proof}

\begin{thebibliography}{10}
\bibitem{entanglement}E. Schr\"odinger, Proc. Cambridge Philos. Soc.
\textbf{31}, 555 (1935); A. Einstein, N. Podolski, y N. Rosen, Phys.
Rev. \textbf{47}, 777 (1935); J. S. Bell, Physics \textbf{1}, 195
(1964).

\bibitem{nielsenchuang} M. A. Nielsen, and I. L. Chuang, \textit{Quantum
Computation and Quantum Information} (Cambridge University Press,
New York, 2000).

\bibitem{reviews} M. B. Plenio and S. Virmani, Quantum Inf. Comput.
\textbf{7}, 1 (2007); R. Horodecki \textit{et al.}, Rev. Mod. Phys.
\textbf{81}, 865 (2009).

\bibitem{secretsharing} M. Hillery, V. Buzek, and A. Berthiaume Phys.
Rev. A \textbf{59}, 1829 (1999); D. Gottesman Phys. Rev. A 61, 042311
(2000).

\bibitem{1wayqc} R. Raussendorf and H. J. Briegel, Phys. Rev. Lett.
\textbf{86}, 5188 (2001).

\bibitem{metrology} V. Giovannetti, S. Lloyd, and L. Maccone, Nature
Photonics \textbf{5}, 222 (2011); G. T\'oth and I. Apellaniz, J. Phys.
A: Math. Theor. \textbf{47}, 424006 (2014).

\bibitem{reviewresource} See the recent review E. Chitambar and G.
Gour, arXiv:1806.06107 {[}quant-ph{]} (2018).

\bibitem{coherence} A. Streltsov, G. \textbf{Adesso}, and M. B. Plenio, Rev.
Mod. Phys. \textbf{89}, 041003 (2017).

\bibitem{frameness} S. D. Bartlett, T. Rudolph, and R. W. Spekkens,
Rev. Mod. Phys. \textbf{79}, 555 (2007); G. Gour and R. W. Spekkens,
New J. Phys. \textbf{10}, 033023 (2008).

\bibitem{thermodynamics} F. G. S. L. Brand\~ao, M. Horodecki, J. Oppenheim,
J. M. Renes and R. W. Spekkens, Phys. Rev. Lett. \textbf{111}, 250404
(2013); G. Gour, M. P. M\"uller, V. Narasimhachar, R. W. Spekkens, and
N. Y. Halpern, Phys. Rep. \textbf{583}, 1 (2015).

\bibitem{nonlocality} J. I. de Vicente, J. Phys. A: Math. Theor.
\textbf{47}, 424017 (2014); R. Gallego and L. Aolita, Phys. Rev. A
\textbf{95}, 032118 (2017).

\bibitem{steering} R. Gallego and L. Aolita, Phys. Rev. X \textbf{5},
041008 (2015).

\bibitem{nielsen} M. A. Nielsen, Phys. Rev. Lett. \textbf{83}, 436
(1999).

\bibitem{majorization} A. W. Marshall, I. Olkin, and B. Arnold, \textit{Inequalities:
Theory of Majorization and Its Applications} (Springer, New York,
2011).

\bibitem{slocc1} W. D\"ur, G. Vidal, and J. I. Cirac, Phys. Rev. A
\textbf{62}, 062314 (2000).

\bibitem{slocc2} See e.g.\ F. Verstraete, J. Dehaene, B. De Moor,
and H. Verschelde, Phys. Rev. A, \textbf{65}, 052112 (2002); E. Briand,
J.-G. Luque, J.-Y. Thibon, and F. Verstraete, J. Math. Phys. \textbf{45},
4855 (2004).

\bibitem{isolation} J. I. de Vicente, C. Spee, and B. Kraus, Phys.
Rev. Lett. \textbf{111}, 110502 (2013); C. Spee, J. I. de Vicente,
and B. Kraus, J. Math. Phys. \textbf{57}, 052201 (2016); M. Hebenstreit,
C. Spee, and B. Kraus, Phys. Rev. A \textbf{93}, 012339 (2016); C.
Spee, J. I. de Vicente, D. Sauerwein, and B. Kraus, Phys. Rev. Lett.
\textbf{118}, 040503 (2017); J. I. de Vicente, C. Spee, D. Sauerwein,
and B. Kraus, Phys. Rev. A \textbf{95}, 012323 (2017); G. Gour, B.
Kraus, and N. R. Wallach, J. Math. Phys. \textbf{58}, 092204 (2017).

\bibitem{isolationlast} D. Sauerwein, N. R. Wallach, G. Gour, and
B. Kraus, Phys. Rev. X \textbf{8}, 031020 (2018).

\bibitem{irreversibility} G. Vidal and J. I. Cirac, Phys. Rev. Lett. \textbf{86}, 5803 (2001).

\bibitem{brandaoplenio} F. G. S. L. Brand\~ao and M. B. Plenio, Nature
Physics \textbf{4}, 873 (2008); Commun. Math. Phys. \textbf{295}, 829 (2010).

\bibitem{brandaogour} F. G. S. L. Brand\~ao and G. Gour, Phys. Rev. Lett. \textbf{115}, 070503 (2015).

\bibitem{coherencePlenio} T. Baumgratz, M. Cramer, and M. B. Plenio, Phys. Rev. Lett. \textbf{113}, 140401 (2014).

\bibitem{visibility} T. Biswas, M. Garcia Diaz, and A. Winter, Proc. R. Soc. A \textbf{473}, 20170170 (2017).

\bibitem{robcoh} C. Napoli, T. R. Bromley, M. Cianciaruso, M. Piani, N. Johnston, and G. Adesso, Phys. Rev. Lett. \textbf{116}, 150502 (2016).


\bibitem{ChiGocoherence} E. Chitambar and G. Gour, Phys.
Rev. Lett. \textbf{117}, 030401 (2016).

\bibitem{ree} V. Vedral, M. B. Plenio, M. A. Rippin, and P. L. Knight,
Phys. Rev. Lett. \textbf{78}, 2275 (1997); V. Vedral and M.B. Plenio,
Phys. Rev. A \textbf{57}, 1619 (1998).

\bibitem{robustness} G. Vidal and R. Tarrach, Phys. Rev. A \textbf{59},
141 (1999).

\bibitem{geometric} T.-C. Wei and P. M. Goldbart, Phys. Rev. A \textbf{68},
042307 (2003).

\bibitem{beyondlocc} E. Chitambar, J. I. de Vicente, M. W. Girard,
and G. Gour, arXiv:1711.03835 {[}quant-ph{]} (2017).

\bibitem{gingrich} R. M. Gingrich, Phys. Rev. A \textbf{65}, 052302
(2002).

\bibitem{supmat} See Supplemental Material for full proofs of Theorems
1, 2, 3 and 4.

\bibitem{Wmaxgeom} L. Chen, A. Xu, and H. Zhu, Phys. Rev. A \textbf{82},
032301 (2010).

\bibitem{relativeentropyWGHZ} D. Markham, A. Miyake, and S. Virmani,
New. J. Phys. \textbf{9}, 194, (2007); M. Hayashi, D. Markham, M.
Murao, M. Owari, and S. Virmani, Phys. Rev. A \textbf{77}, 012104
(2008).


\bibitem{puregbs} See e.g.\ A. Biswas, R. Prabhu, A. Sen De, and
U. Sen, Phys. Rev. A \textbf{90}, 032301 (2014).

\bibitem{Toth_detectingGME} G. T\'oth and O. G\"uhne, Phys. Rev. Lett.
\textbf{94}, 060501 (2005).

\bibitem{graph} M. Hein, W. D\"ur, J. Eisert, R. Raussendorf, M. Van
den Nest, and H.-J. Briegel, \textit{Proceedings of the International
School of Physics ``Enrico Fermi'' on ``Quantum Computers, Algorithms
and Chaos''}, arXiv:quant-ph/0602096 (2006); H. J. Briegel, D. E.
Browne, W. D\"ur, R. Raussendorf, and M. Van den Nest, Nature Physics
\textbf{5}, 19 (2009).

\bibitem{ame} W. Helwig, W. Cui, J. I. Latorre, A. Riera, H.-K. Lo,
Phys. Rev. A \textbf{86}, 052335 (2012).


\bibitem{brandao} F. G. S. L. Brand\~ao, Phys. Rev. A \textbf{72},
022310 (2005).




\bibitem{key-2}W. D\"ur, J. I. Cirac, and R. Tarrach, Phys. Rev. Lett.
\textbf{83}, 3562 (1999).

\bibitem{key-3}C. Eltschka and J. Siewert, Phys. Rev. Lett. \textbf{108},
020502 (2012).

\bibitem{key-4}C. Eltschka and J. Siewert, Quantum Info. Comput.
\textbf{13}, 210 (2013).

\bibitem{key-7}S. Turgut, Y. G\"ul, and N. K. Pak, Phys. Rev. A \textbf{81},
012317 (2010).

\bibitem{key-8}S. Kinta\c{s} and S. Turgut, J. Math. Phys. \textbf{51},
092202 (2010).

\bibitem{key-1}K. Eckert, J. Schliemann, D. Bruß, and M. Lewenstein,
Ann. Phys. \textbf{299}, 88 (2002).

\bibitem{key-2}R. H\"ubener, M. Kleinmann, T.-C. Wei, C. Gonz\'alez-Guill\'en,
and O. G\"uhne, Phys. Rev. A \textbf{80}, 032324 (2009).

\bibitem{key-3}A. W. Harrow and M. A. Nielsen, Phys. Rev. A \textbf{68},
012308 (2003).
\end{thebibliography}

\begin{thebibliography}{17}
\expandafter\ifx\csname natexlab\endcsname\relax\def\natexlab#1{#1}\fi
\expandafter\ifx\csname bibnamefont\endcsname\relax
  \def\bibnamefont#1{#1}\fi
\expandafter\ifx\csname bibfnamefont\endcsname\relax
  \def\bibfnamefont#1{#1}\fi
\expandafter\ifx\csname citenamefont\endcsname\relax
  \def\citenamefont#1{#1}\fi
\expandafter\ifx\csname url\endcsname\relax
  \def\url#1{\texttt{#1}}\fi
\expandafter\ifx\csname urlprefix\endcsname\relax\def\urlprefix{URL }\fi
\providecommand{\bibinfo}[2]{#2}
\providecommand{\eprint}[2][]{\url{#2}}

\bibitem[{\citenamefont{Chitambar et~al.}(2017)\citenamefont{Chitambar,
  de~Vicente, Girard, and Gour}}]{chitambar_entanglement_2017}
\bibinfo{author}{\bibfnamefont{E.}~\bibnamefont{Chitambar}},
  \bibinfo{author}{\bibfnamefont{J.~I.} \bibnamefont{de~Vicente}},
  \bibinfo{author}{\bibfnamefont{M.~W.} \bibnamefont{Girard}},
  \bibnamefont{and} \bibinfo{author}{\bibfnamefont{G.}~\bibnamefont{Gour}},
  \bibinfo{journal}{arXiv:1711.03835 [quant-ph]}  (\bibinfo{year}{2017}).

\bibitem[{\citenamefont{Wei and Goldbart}(2003)}]{wei_geometric_2003}
\bibinfo{author}{\bibfnamefont{T.-C.} \bibnamefont{Wei}} \bibnamefont{and}
  \bibinfo{author}{\bibfnamefont{P.~M.} \bibnamefont{Goldbart}},
  \bibinfo{journal}{Phys. Rev. A} \textbf{\bibinfo{volume}{68}},
  \bibinfo{pages}{042307} (\bibinfo{year}{2003}).

\bibitem[{\citenamefont{Vidal and Tarrach}(1999)}]{vidal_robustness_1999}
\bibinfo{author}{\bibfnamefont{G.}~\bibnamefont{Vidal}} \bibnamefont{and}
  \bibinfo{author}{\bibfnamefont{R.}~\bibnamefont{Tarrach}},
  \bibinfo{journal}{Phys. Rev. A} \textbf{\bibinfo{volume}{59}},
  \bibinfo{pages}{141} (\bibinfo{year}{1999}).

\bibitem[{\citenamefont{Sauerwein et~al.}(2017)\citenamefont{Sauerwein,
  Wallach, Gour, and Kraus}}]{sauerwein_transformations_2017}
\bibinfo{author}{\bibfnamefont{D.}~\bibnamefont{Sauerwein}},
  \bibinfo{author}{\bibfnamefont{N.~R.} \bibnamefont{Wallach}},
  \bibinfo{author}{\bibfnamefont{G.}~\bibnamefont{Gour}}, \bibnamefont{and}
  \bibinfo{author}{\bibfnamefont{B.}~\bibnamefont{Kraus}},
  \bibinfo{journal}{arXiv:1711.11056 [quant-ph]}  (\bibinfo{year}{2017}),
  \bibinfo{note}{arXiv: 1711.11056}.

\bibitem[{\citenamefont{Brand{\~a}o}(2005)}]{brandao_quantifying_2005}
\bibinfo{author}{\bibfnamefont{F.~G. S.~L.} \bibnamefont{Brand{\~a}o}},
  \bibinfo{journal}{Phys. Rev. A} \textbf{\bibinfo{volume}{72}},
  \bibinfo{pages}{022310} (\bibinfo{year}{2005}).

\bibitem[{\citenamefont{D{\"u}r et~al.}(1999)\citenamefont{D{\"u}r, Cirac, and
  Tarrach}}]{dur_separability_1999}
\bibinfo{author}{\bibfnamefont{W.}~\bibnamefont{D{\"u}r}},
  \bibinfo{author}{\bibfnamefont{J.~I.} \bibnamefont{Cirac}}, \bibnamefont{and}
  \bibinfo{author}{\bibfnamefont{R.}~\bibnamefont{Tarrach}},
  \bibinfo{journal}{Phys. Rev. Lett.} \textbf{\bibinfo{volume}{83}},
  \bibinfo{pages}{3562} (\bibinfo{year}{1999}).

\bibitem[{\citenamefont{Hayashi et~al.}(2006)\citenamefont{Hayashi, Markham,
  Murao, Owari, and Virmani}}]{hayashi_bounds_2006}
\bibinfo{author}{\bibfnamefont{M.}~\bibnamefont{Hayashi}},
  \bibinfo{author}{\bibfnamefont{D.}~\bibnamefont{Markham}},
  \bibinfo{author}{\bibfnamefont{M.}~\bibnamefont{Murao}},
  \bibinfo{author}{\bibfnamefont{M.}~\bibnamefont{Owari}}, \bibnamefont{and}
  \bibinfo{author}{\bibfnamefont{S.}~\bibnamefont{Virmani}},
  \bibinfo{journal}{Phys. Rev. Lett.} \textbf{\bibinfo{volume}{96}},
  \bibinfo{pages}{040501} (\bibinfo{year}{2006}).

\bibitem[{\citenamefont{Eltschka and
  Siewert}(2012)}]{eltschka_entanglement_2012}
\bibinfo{author}{\bibfnamefont{C.}~\bibnamefont{Eltschka}} \bibnamefont{and}
  \bibinfo{author}{\bibfnamefont{J.}~\bibnamefont{Siewert}},
  \bibinfo{journal}{Phys. Rev. Lett.} \textbf{\bibinfo{volume}{108}},
  \bibinfo{pages}{020502} (\bibinfo{year}{2012}).

\bibitem[{\citenamefont{Eltschka and Siewert}(2013)}]{eltschka_optimal_2013}
\bibinfo{author}{\bibfnamefont{C.}~\bibnamefont{Eltschka}} \bibnamefont{and}
  \bibinfo{author}{\bibfnamefont{J.}~\bibnamefont{Siewert}},
  \bibinfo{journal}{Quantum Info. Comput.} \textbf{\bibinfo{volume}{13}},
  \bibinfo{pages}{210} (\bibinfo{year}{2013}).

\bibitem[{\citenamefont{Eckert et~al.}(2002)\citenamefont{Eckert, Schliemann,
  Bru{\ss}, and Lewenstein}}]{eckert_quantum_2002}
\bibinfo{author}{\bibfnamefont{K.}~\bibnamefont{Eckert}},
  \bibinfo{author}{\bibfnamefont{J.}~\bibnamefont{Schliemann}},
  \bibinfo{author}{\bibfnamefont{D.}~\bibnamefont{Bru{\ss}}}, \bibnamefont{and}
  \bibinfo{author}{\bibfnamefont{M.}~\bibnamefont{Lewenstein}},
  \bibinfo{journal}{Annals of Physics} \textbf{\bibinfo{volume}{299}},
  \bibinfo{pages}{88} (\bibinfo{year}{2002}).

\bibitem[{\citenamefont{Hayashi et~al.}(2008)\citenamefont{Hayashi, Markham,
  Murao, Owari, and Virmani}}]{hayashi_entanglement_2008}
\bibinfo{author}{\bibfnamefont{M.}~\bibnamefont{Hayashi}},
  \bibinfo{author}{\bibfnamefont{D.}~\bibnamefont{Markham}},
  \bibinfo{author}{\bibfnamefont{M.}~\bibnamefont{Murao}},
  \bibinfo{author}{\bibfnamefont{M.}~\bibnamefont{Owari}}, \bibnamefont{and}
  \bibinfo{author}{\bibfnamefont{S.}~\bibnamefont{Virmani}},
  \bibinfo{journal}{Phys. Rev. A} \textbf{\bibinfo{volume}{77}},
  \bibinfo{pages}{012104} (\bibinfo{year}{2008}).

\bibitem[{\citenamefont{H{\"u}bener et~al.}(2009)\citenamefont{H{\"u}bener,
  Kleinmann, Wei, Gonz{\'a}lez-Guill{\'e}n, and
  G{\"u}hne}}]{hubener_geometric_2009}
\bibinfo{author}{\bibfnamefont{R.}~\bibnamefont{H{\"u}bener}},
  \bibinfo{author}{\bibfnamefont{M.}~\bibnamefont{Kleinmann}},
  \bibinfo{author}{\bibfnamefont{T.-C.} \bibnamefont{Wei}},
  \bibinfo{author}{\bibfnamefont{C.}~\bibnamefont{Gonz{\'a}lez-Guill{\'e}n}},
  \bibnamefont{and}
  \bibinfo{author}{\bibfnamefont{O.}~\bibnamefont{G{\"u}hne}},
  \bibinfo{journal}{Phys. Rev. A} \textbf{\bibinfo{volume}{80}},
  \bibinfo{pages}{032324} (\bibinfo{year}{2009}).

\bibitem[{\citenamefont{Harrow and Nielsen}(2003)}]{harrow_robustness_2003}
\bibinfo{author}{\bibfnamefont{A.~W.} \bibnamefont{Harrow}} \bibnamefont{and}
  \bibinfo{author}{\bibfnamefont{M.~A.} \bibnamefont{Nielsen}},
  \bibinfo{journal}{Phys. Rev. A} \textbf{\bibinfo{volume}{68}},
  \bibinfo{pages}{012308} (\bibinfo{year}{2003}).

\bibitem[{\citenamefont{de~Vicente et~al.}(2013)\citenamefont{de~Vicente, Spee,
  and Kraus}}]{de_vicente_maximally_2013}
\bibinfo{author}{\bibfnamefont{J.~I.} \bibnamefont{de~Vicente}},
  \bibinfo{author}{\bibfnamefont{C.}~\bibnamefont{Spee}}, \bibnamefont{and}
  \bibinfo{author}{\bibfnamefont{B.}~\bibnamefont{Kraus}},
  \bibinfo{journal}{Phys. Rev. Lett.} \textbf{\bibinfo{volume}{111}},
  \bibinfo{pages}{110502} (\bibinfo{year}{2013}).

\bibitem[{\citenamefont{D{\"u}r et~al.}(2000)\citenamefont{D{\"u}r, Vidal, and
  Cirac}}]{dur_three_2000}
\bibinfo{author}{\bibfnamefont{W.}~\bibnamefont{D{\"u}r}},
  \bibinfo{author}{\bibfnamefont{G.}~\bibnamefont{Vidal}}, \bibnamefont{and}
  \bibinfo{author}{\bibfnamefont{J.~I.} \bibnamefont{Cirac}},
  \bibinfo{journal}{Phys. Rev. A} \textbf{\bibinfo{volume}{62}},
  \bibinfo{pages}{062314} (\bibinfo{year}{2000}).

\bibitem[{\citenamefont{Turgut et~al.}(2010)\citenamefont{Turgut, G{\"u}l, and
  Pak}}]{turgut_deterministic_2010}
\bibinfo{author}{\bibfnamefont{S.}~\bibnamefont{Turgut}},
  \bibinfo{author}{\bibfnamefont{Y.}~\bibnamefont{G{\"u}l}}, \bibnamefont{and}
  \bibinfo{author}{\bibfnamefont{N.~K.} \bibnamefont{Pak}},
  \bibinfo{journal}{Phys. Rev. A} \textbf{\bibinfo{volume}{81}},
  \bibinfo{pages}{012317} (\bibinfo{year}{2010}).

\bibitem[{\citenamefont{K{\i}nta{\c s} and
  Turgut}(2010)}]{kintas_transformations_2010}
\bibinfo{author}{\bibfnamefont{S.}~\bibnamefont{K{\i}nta{\c s}}}
  \bibnamefont{and} \bibinfo{author}{\bibfnamefont{S.}~\bibnamefont{Turgut}},
  \bibinfo{journal}{J. Math. Phys.} \textbf{\bibinfo{volume}{51}},
  \bibinfo{pages}{092202} (\bibinfo{year}{2010}).

\end{thebibliography}
%

\end{document}